\algnewcommand{\algorithmicgoto}{\textbf{go to}}%
\algnewcommand{\Goto}[1]{\algorithmicgoto~\ref{#1}}%
\newtheorem{theorem}{Theorem}%
\newtheorem{lemma}{Lemma}%
\newtheorem{example}{Example}%
\newtheorem{remark}{Remark}%
\newtheorem{proof}{Proof}%
\newenvironment{keywords}{
	\vspace{1em}\noindent\textbf{Keywords:}\ } %
{}
\begin{document}

\newgeometry{top=72pt,bottom=54pt,right=54pt,left=54pt}

\title{Fastest Mixing Reversible Markov Chain on Friendship Graph: Trade-Off between Transition Probabilities among Friends and Convergence Rate}
\author{Saber Jafarizadeh \\
	Rakuten Institute of Technology, Rakuten Crimson House, Tokyo, Japan \\
	\texttt{saber.jafarizadeh@rakuten.com}}

\date{}
\maketitle
\thispagestyle{empty} %

\bibliographystyle{plain}

\begin{abstract}

A long-standing goal of social network research has been to alter the properties of network to achieve the desired outcome.
In doing so, DeGroot's consensus model has served as the popular choice for modeling the information diffusion and opinion formation in social networks. 
Achieving a trade-off between the cost associated with modifications made to the network and the speed of convergence to the desired state has shown to be a critical factor. 
This has been treated as the Fastest Mixing Markov Chain (FMMC) problem over a graph with given transition probabilities over a subset of edges. 
Addressing this multi-objective optimization problem over the friendship graph, this paper has provided the corresponding Pareto optimal points or the Pareto frontier. 
In the case of friendship graph with at least three blades, it is shown that the Pareto frontier is reduced to a global minimum point which is same as the optimal point corresponding to the minimum spanning tree of the friendship graph, i.e., the star topology. 
	Furthermore, 
	a lower limit for transition probabilities among friends has been provided, 
	where values higher than this limit do not have any impact on the convergence rate.

\end{abstract}

\begin{keywords}
Multi-objective Optimization, 
Pareto Frontier, 
Reversible Markov Chain, 
Friendship Graph, 
DeGroot Consensus, 
Semidefinite Programming
\end{keywords}

\section{Introduction}
\label{sec:Introduction}

Social networking services have provided massive global infrastructure for individuals to exchange ideas and influence each other.
This phenomenon has been extensively studied 
for 
modeling the exchange of information 
and 
the impact of 
influence among individuals on processes such as opinion formation, influence  propagation, and information diffusion.
Majority of the developed models in the literature are based on the consensus models developed by 
DeGroot \cite{Degroot1974} and Lehrer \cite{Lehrer1975}, 
with %
additional features such as interaction frequency \cite{Patterson2010}.
Initial approaches are 
focused on structural properties of the graph model, including vertex degree, betweenness and distance centralities \cite{Wasserman1994}.
Subsequent developments take into account dynamic rather than static characteristics of the network's graph model \cite{Patterson2010,Viswanath2009,Wilson2009,Kossinets2008}.
A common goal of the developed models is to alter the properties of the network to achieve the desired outcome, such as encouraging the support for a cause or purchase of certain services or products.
In such scenarios, an important factor is to achieve a trade-off between the cost associated with alterations made to the network and the speed of convergence to the desired equilibrium state. 
This problem can be treated as the Fastest Mixing Markov Chain (FMMC) problem over a graph where the transition probabilities over a subset of edges are given.
Fastest Mixing Markov Chain (FMMC) problem is the problem of optimizing the mixing rate or the asymptotic convergence rate of Markov chain over a graph, by mimicking the transition probabilities.
In \cite{BoydFastestmixing2003}, it is shown that the asymptotic convergence rate of the Markov chain to the equilibrium distribution is determined by the Second Largest Eigenvalue Modulus ($SLEM$) of the transition probability matrix, %
which can be formulated as a Semidefinite Programming (SDP) problem. %
Within this context, %
authors in \cite{Gnecco2015,Gnecco2014} have addressed the problem through $l_{1}$-norm and $l_{0}$-"pseudo-norm" regularized versions of the FMMC problem, 
where they have provided theoretical and numerical results about several sparse variations of the problem. 
In \cite{Fardad2014}, 
the problem has been tackled by adding undirected edges to an existing graph, subject to a link-creation budget.
In this paper, we have considered the FMMC problem for reversible Markov chains (referred to as the FMRMC problem) with given transition probabilities over a subset of edges (denoted by fix edges), which can not be modified for the purpose of optimization. 
Here, 
we have addressed this multi-objective optimization problem by solving and formulating 
the %
corresponding Pareto frontier, which can be utilized for deriving a trade-off between the spectral properties of the topology (including the convergence rate) and the transition probabilities assigned to fix edges.
The topology considered in this paper is the friendship graph with the set of edges among friends as the fix edges. 
A friendship graph is a graph where every pair of vertices has exactly one common neighbor \cite{Erdos1966}.
Friendship graphs, due to their properties have been utilized in different disciplines, 
including 
block code design \cite{Leonard2005}, 
set theory \cite{LONGYEAR1972257} %
and 
graph labeling \cite{GallianAdynamic}.
In \cite{Bloch2018}, 
the authors have devised a mechanism for extracting ordinal information disseminated in a social network based on friend-based ranking, where the friendship graph has shown to be 
the sparsest topology for which the planner (i.e., the central vertex) can construct a complete ranking.
In the case of friendship graph with at least three blades (i.e., triangles), we have provided the optimal $SLEM$ 
and proved that it is same as that of the minimum spanning tree of the friendship graph, i.e., the star topology.
Furthermore, we have shown that 
the optimal $SLEM$ is independent of the transition probabilities over fix edges, 
and 
the Pareto frontier reduces to a single point, i.e., the minimum point. 
For the friendship graph with two blades, the Pareto Frontier is reduced to a minimum point only for equilibrium distribution satisfying a certain symmetry pattern.
For the friendship graph with one blade, the Pareto Frontier in always present. %
Friendship graph along with 
formulations of the reversible Markov chain over this graph and the FMRMC problem in terms of the symmetric Laplacian 
are introduced in Section \ref{sec:Preliminaries}. 
In Section \ref{sec:MainResults}, the main results are presented including 
the optimal weights and $SLEM$. %
The solution procedures are provided in Section \ref{sec:ProofofMainResults}. 
Section \ref{sec:Conclusions} concludes the paper.
\section{Preliminaries}
\label{sec:Preliminaries}
This section introduces the friendship graph followed by the formulations of the reversible Markov chain over this graph, and the FMRMC problem in terms of the symmetric Laplacian.

\subsection{Friendship Graph}
The friendship graph $\mathcal{F}_{m}$ is collection of $m$ triangles sharing a single common vertex, or in other words
a graph where every pair of its vertices has exactly one common neighbor.
This graph is depicted in Fig. \ref{fig:friendship}.
\begin{figure}
\centering
\begin{subfigure}[b]{0.234\hsize}
\includegraphics[width=\hsize]{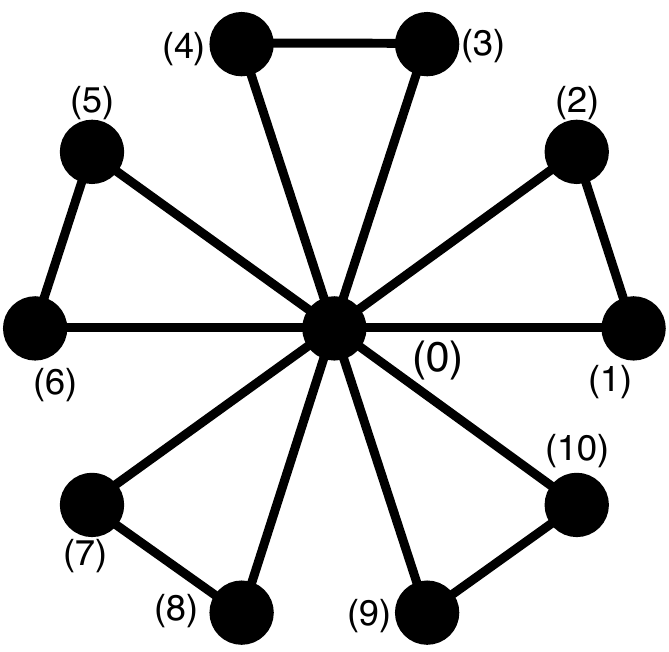}
\caption{}
\label{fig:friendship-m-5}
\end{subfigure}
\hspace{10pt}
\begin{subfigure}[b]{0.088\hsize}
\includegraphics[width=\hsize]{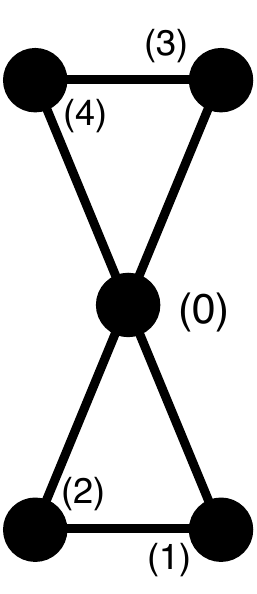}
\vspace{3pt}
\caption{}
\label{fig:friendship-m-2}
\end{subfigure}
\hspace{10pt}
\begin{subfigure}[b]{0.0845\hsize}
\includegraphics[width=\hsize]{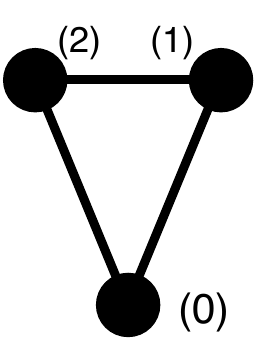}
\vspace{20pt}
\caption{}
\label{fig:friendship-m-1}
\end{subfigure}
\caption{Friendship graph with (a) $m=5$, (b) $m=2$ and (c) $m=1$ blades.}
\label{fig:friendship}
\end{figure}
The vertex set of this topology is
$\mathcal{V} = \{(0)\} \cup \{ (2i-1) | i=1,...,m \}\cup \{ (2i) | i=1,...,m \}$,
where $(0)$ denotes the central vertex of the star topology.
The edge set of friendship graph is 
$\mathcal{E}$ $=$ $\{ (0,2i-1) | i=1,...,m \}$ $\cup$ $\{ (0,2i) | i=1,...,m \}$ $\cup \{ (2i-1,$ $2i) | i=1,...,m \}$.
\subsection{Reversible Markov Chain on Friendship Graph}
\label{sec:RMCsection}

We consider an undirected friendship graph
$\mathcal{F}_{m}$    %
with vertex and edge sets $\mathcal{V}$ and $\mathcal{E}$, respectively.
Markov chain over the friendship graph can be defined by associating the state of Markov chain with vertices in the graph, i.e. $X(t) \in \mathcal{V}$, for $t=0,1,...$.
Correspondingly, transition of Markov chain between two adjacent states (vertices), is modeled by the edges in the graph, where each edge $\{i,j\}$ is associated with a transition probability $\boldsymbol{P}_{i,j}$.
For the Markov chain to be valid, the transition probability of each edge should be nonnegative (i.e. $\boldsymbol{P}_{i,j} \geq 0$) and the sum of the probabilities of outgoing edges of each vertex (including the self-loop edge) should be one.
A fundamental property of the irreducible Markov chains is the unique equilibrium distribution $(\pi)$, that satisfies
$\pi_{j}  =  \sum\nolimits_{i} \pi_{i} \boldsymbol{P}_{i,j}$ for all $j$ \cite{AldousBook2}.
The convergence theorem \cite{AldousBook2} states that for any initial distribution $P\left(X(t) = j \right) \rightarrow \pi_{j}$ as $t \rightarrow \infty$, for all $j$, provided the chain is aperiodic and irreducible.
The Markov chain is called reversible if the detailed balance equations holds true \cite{AldousBook2}, i.e.
\begin{equation}
\label{eq:Eq20171216220}
\begin{gathered}
\pi_{i} \boldsymbol{P}_{i,j}  =  \pi_{j} \boldsymbol{P}_{j,i} \;\; \text{for all} \;\; i,j.
\end{gathered}
\end{equation}
This paper is concerned with the rate of convergence of Markov chain to its equilibrium distribution.
This rate of convergence is determined by the eigenstructure of the transition probability matrix $\boldsymbol{P}$.
The eigenvalues of the transition probability matrix $\boldsymbol{P}$ are real, and no more than one in
magnitude \cite{AldousBook2}.
We denote %
the eigenvalues of the transition probability matrix $\boldsymbol{P}$ in non-increasing order, 
$-1 \leq \lambda_N \left( \boldsymbol{P} \right) \leq \lambda_{N-1} \left( \boldsymbol{P} \right) \leq \cdots \leq \lambda_1 \left( \boldsymbol{P} \right) = 1.$
Eigenstructure
of the transition matrix $\left( \boldsymbol{P} \right)$ %
determines the asymptotic convergence rate of the Markov chain to the equilibrium distribution $\left( \pi \right)$.
The asymptotic convergence rate of the Markov chain to the equilibrium distribution $\pi$ is determined by the Second Largest Eigenvalue Modulus (SLEM) of the transition probability matrix \cite{BoydFastestmixing2003}, defined as below
\begin{equation}
\label{eq:Eq20171216227}
\begin{gathered}
\mu \left( \boldsymbol{P} \right)
=
\max_{i=2, ..., N} \left| \lambda_{i} \left( \boldsymbol{P} \right) \right|
=
\max\{ \lambda_{2} \left( \boldsymbol{P} \right), -\lambda_{N} \left( \boldsymbol{P} \right) \}.
\end{gathered}
\end{equation}
Smaller SLEM results in faster convergence to the equilibrium distribution.
The quantity $\log\left( 1 / \mu \left( \boldsymbol{P} \right)  \right)$ is referred to as the mixing rate, and $\tau = 1 / \log\left( 1 / \mu \left( \boldsymbol{P} \right)  \right)$ as the mixing time \cite{BoydFastestmixing2003}.
The problem of optimizing the convergence rate of the reversible Markov chain to its equilibrium distribution is known as the Fastest Mixing Reversible Markov Chain (FMRMC) problem.
Let the transition probabilities among friends, i.e., the transition probabilities over the set of edges 
$\{ (2i-1,2i) | i=1,...,m \}$
be an additional costs which have to be minimized.
Using the detailed balance property (\ref{eq:Eq20171216220}), the transition probability from vertex $(i)$ to $(j)$, i.e., $\boldsymbol{P}_{i,j}$ can be determined from the transition probability in reverse direction, i.e., $\boldsymbol{P}_{j,i}$.
In %
this paper, we address the problem of optimizing the convergence rate of the reversible Markov chain with the additional objective of minimizing the transition probabilities among friends.
This can be formulated as the following
multi-objective optimization problem \cite{BoydConvexBook},
\begin{equation}
\label{eq:Eq20181027427}
\begin{aligned}
\min\limits_{ \{\boldsymbol{P}_{i,j} | \{i,j\} \in \mathcal{E} \} }
\;\;
&
	\boldsymbol{f}_{0} ( \boldsymbol{P} ) =	\left[	\mu\left(\boldsymbol{P}\right),	\boldsymbol{P}_{1,2},	...,	\boldsymbol{P}_{2m-1,2m},	\right]^{T}
\\
s.t.
\qquad\;\;
&
\boldsymbol{P} \geq 0,
\;\;
\boldsymbol{P} \boldsymbol{1} = \boldsymbol{1},
\;\;
\boldsymbol{\pi}^{T} \boldsymbol{P} = \boldsymbol{\pi}^{T},
\\
&
\boldsymbol{D} \boldsymbol{P} = \boldsymbol{P}^{T} \boldsymbol{D},
\;\;
\boldsymbol{P}_{i,j} = 0 \; \text{for} \; \{i,j\} \notin \mathcal{E}.
\end{aligned}
\end{equation}
where
$\boldsymbol{P} \geq 0$ is element-wise,
$\boldsymbol{D} = diag\left( \pi_{0}, \pi_{1}, ..., \pi_{2m} \right)$,
$\boldsymbol{1}$ is the vector of all one
and
$\boldsymbol{\pi} = [ \pi_{0}, \pi_{1}, \pi_{2}, ..., \pi_{2m} ]^{T}$.
The constraint $\boldsymbol{D} \boldsymbol{P} = \boldsymbol{P}^{T} \boldsymbol{D}$ imposes the detailed balance equation (\ref{eq:Eq20171216220}).
For two points in the feasible region $\boldsymbol{P} (1)$ and $\boldsymbol{P} (2)$, the inequality $\boldsymbol{f}_{0} ( \boldsymbol{P} (1) )  \leq  \boldsymbol{f}_{0} ( \boldsymbol{P} (2) )$ between their corresponding objective functions is element-wise, i.e.,
$\mu( \boldsymbol{P} (1) )$  $\leq$  $\mu( \boldsymbol{P} (2) )$,
$\boldsymbol{P}_{1,2} ( \boldsymbol{P} (1) )$   $\leq$  $\boldsymbol{P}_{1,2} ( \boldsymbol{P} (2) )$,
$...$,
$\boldsymbol{P}_{2m-1,2m} ( \boldsymbol{P} (1) )$   $\leq$  $\boldsymbol{P}_{2m-1,2m} ( \boldsymbol{P} (2) )$.
Considering $\mathcal{S}$ as the set of feasible points,
point $\boldsymbol{P}(1)$ is a minimal element of $\mathcal{S}$ if $\boldsymbol{P}(2) \in \mathcal{S}$ and  $\boldsymbol{P}(2) \leq \boldsymbol{P}(1)$ only if $\boldsymbol{P}(2) = \boldsymbol{P}(1)$,
and
point $\boldsymbol{P}(1)$ is a minimum element of $\mathcal{S}$ if for every  $\boldsymbol{P}(2) \in \mathcal{S}$, we have $\boldsymbol{P}(1) \leq \boldsymbol{P}(2)$.
The notion of minimal element is weaker than that of the minimum element. 
A minimal element of $\mathcal{S}$ is also denoted as a Pareto optimal point.
The set of Pareto optimal points
is called the Pareto frontier.
To determine the Pareto frontier of the multi-objective optimization problem (\ref{eq:Eq20181027427}),
we minimize the $SLEM$ for given values of the transition probabilities over the subset of edges between friends (i.e., 
$\mathcal{E}_{F} = \{ (2i-1,2i) | i=1,...,m \}$).
The resultant optimal points, determine a part of boundary of the feasible region
that contains the Pareto frontier.
Hence, problem (\ref{eq:Eq20181027427}) can be interpreted as the following  optimization problem,
\begin{equation}
\label{eq:Eq201712161073}
\begin{aligned}
\hspace{-10pt}
\min\limits_{ \{\boldsymbol{P}_{i,j} | \{i,j\} \in \mathcal{E}_{V} \} }
\;\;
&\mu\left(\boldsymbol{P}\right),
\\
s.t.
\qquad\;\;
&
\boldsymbol{P} \geq 0,
\;\;
\boldsymbol{P} \boldsymbol{1} = \boldsymbol{1},
\;\;
\boldsymbol{\pi}^{T} \boldsymbol{P} = \boldsymbol{\pi}^{T},
\\
&
\boldsymbol{D} \boldsymbol{P} = \boldsymbol{P}^{T} \boldsymbol{D},
\;\;
\boldsymbol{P}_{i,j} = 0 \; \text{for} \; \{i,j\} \notin \mathcal{E}.
\end{aligned}
\end{equation}
where
$\mathcal{E}_{V} = \{ (0,2i-1) \} \cup \{ (0,2i) \}$ for $i=1,...,m$.
Note that $\mathcal{E}_{F}$ and $\mathcal{E}_{V}$ are disjoint subsets and $\mathcal{E}_{F} \cup \mathcal{E}_{V} = \mathcal{E}$.
\subsection{FMRMC in terms of Symmetric Laplacian}
\label{sec:FMRMCSymmetricLaplacian}
In (\ref{eq:Eq201712161073}), the constraint $\boldsymbol{D} \boldsymbol{P} = \boldsymbol{P}^{T} \boldsymbol{D}$ adds certain difficulties to the solution of (\ref{eq:Eq201712161073}). %
To overcome this problem,
here, %
we have converted the optimization problem (\ref{eq:Eq201712161073}) into an optimization problem with symmetric optimization variables.
Defining the asymmetric Laplacian as below,
\begin{equation}
\nonumber
\begin{aligned}
\boldsymbol{L} \left( \boldsymbol{P} \right)_{ i,j }
=
\begin{cases}
- \boldsymbol{P}_{i,j} &\text{if} \quad i \neq j,
\\
\sum_{j} \boldsymbol{P}_{i,j} &\text{if} \quad i = j,
\end{cases}
\end{aligned}
\end{equation}
the transition probability matrix $\left( \boldsymbol{P} \right)$ in terms of the asymmetric Laplacian matrix
$\left( \boldsymbol{L} \left( \boldsymbol{P} \right) \right)$
can be written as
$\boldsymbol{P} = \boldsymbol{I} - \boldsymbol{L} \left( \boldsymbol{P} \right)$.
Note that the only constraint on $\pi_{i}$ is that $\pi_{i} > 0$ for $i\in\mathcal{V}$, and there is no constraint for their summation to be equal to one.
Defining the weights $q_{i,j}$ as
\begin{equation}
\label{eq:Eq20171223512}
\begin{gathered}
q_{i,j} = \pi_{i} \boldsymbol{P}_{i,j},
\quad \text{for} \quad \{i,j\} \in \mathcal{E}
\end{gathered}
\end{equation}
and imposing the detailed balance condition (\ref{eq:Eq20171216220})
(which results in symmetric $q_{i,j}$, i.e. $q_{i,j} = \pi_{i} \boldsymbol{P}_{i,j} = \pi_{j} \boldsymbol{P}_{j,i} = q_{j,i}$),
for the transition probability matrix 
$\boldsymbol{P}$,    %
we have
\begin{equation}
\label{eq:Eq20171112287}
\begin{gathered}
\boldsymbol{P}
=
\boldsymbol{I}  -  \boldsymbol{L} \left( \boldsymbol{P} \right)
=
\boldsymbol{D}^{-1} \left( \boldsymbol{D} - \boldsymbol{L} \left( q \right)  \right)
=
\boldsymbol{I} - \boldsymbol{D}^{-1}  \boldsymbol{L} \left( q \right)
\end{gathered}
\end{equation}
where $\boldsymbol{L}(q)$ is the symmetric Laplacian matrix defined as
\begin{equation}
\label{eq:Eq20171112316}
\begin{gathered}
\boldsymbol{L}(q) = \sum\nolimits_{ \{i,j\} \in \mathcal{E} } q_{ij} \left( \boldsymbol{e}_{i} - \boldsymbol{e}_{j} \right) \left( \boldsymbol{e}_{i} - \boldsymbol{e}_{j} \right)^{T}.
\end{gathered}
\end{equation}
$\boldsymbol{e}_{i}$ for $i \in \mathcal{V}$ is a column vector with $i$-th element equal to $1$ and zero elsewhere.
The right eigenvector of
the transition probability matrix
$\boldsymbol{P}$
corresponding to eigenvalue one is $\boldsymbol{1}$
(the vector of all ones),
i.e.
$\boldsymbol{P} \boldsymbol{1} = \left( \boldsymbol{I} - \boldsymbol{D}^{-1} \boldsymbol{L}(q) \right) \boldsymbol{1} = \boldsymbol{1}$. 
The left eigenvector of
the transition probability matrix
$\boldsymbol{P}$
corresponding to eigenvalue one is
$\boldsymbol{\pi}^{T} = \left[ \pi_{0}, \pi_{1}, ..., \pi_{2m} \right]$, i.e.
\begin{equation}
\label{eq:Eq20171112356}
\begin{gathered}
\boldsymbol{\pi}^{T} \boldsymbol{P}
=
\boldsymbol{\pi}^{T} \left( \boldsymbol{I} - \boldsymbol{D}^{-1} \boldsymbol{L}(q) \right)
=
\boldsymbol{\pi}^{T}
\end{gathered}
\end{equation}
Even though the transition probability matrix
$\boldsymbol{P}$
is not symmetric but it has real spectrum \cite{AldousBook2}.
For eigenvalues ($\lambda_n$) and eigenvectors ($\boldsymbol{\phi}_n$) of 
$\boldsymbol{P}$,
we have
$  \boldsymbol{P} \boldsymbol{\phi}_n    =    ( \boldsymbol{I} - \boldsymbol{D}^{-1} \boldsymbol{L}(q) ) \boldsymbol{\phi}_n    =$    $\lambda_n \boldsymbol{\phi}_n
$
for $n=1,...,N$.
Considering 
orthonormality of the eigenvectors $\boldsymbol{\phi}_{n}$ and
$\boldsymbol{P} = \sum\nolimits_{n=1}^{N}  \lambda_n \boldsymbol{\phi}_{n} \boldsymbol{\phi}_{n}^{T} \boldsymbol{D}$, 
$ \boldsymbol{\phi}_{1} \boldsymbol{\phi}_{1}^{T} = \frac{ \boldsymbol{J} }{ \sum_{i} \pi_{i} }$, 
$ \lambda_1 = 1 $, 
it can be shown that 
$\boldsymbol{P}^{k}$ for $k \rightarrow \infty$ converges to the unique answer $ \frac{ \boldsymbol{J} \boldsymbol{D} }{ \sum_{i} \pi_{i} } $, 
or equivalently $\boldsymbol{\pi}$ is the unique equilibrium distribution corresponding to transition probability matrix $\boldsymbol{P}$
\cite{AldousBook2}, 
if 
all eigenvalues $\lambda_{n}$ for $n=2$,...,$N$ are smaller than $1$ in absolute value, or in other words 
the spectral radius of $\boldsymbol{P}$ $-$ $\frac{\boldsymbol{J}\boldsymbol{D}}{\sum_{i}\pi_{i}}$ is not greater than one, i.e., 
$\rho ( \boldsymbol{P} - \frac{\boldsymbol{J}\boldsymbol{D}}{\sum_{i}\pi_{i}} )  =  \rho ( \boldsymbol{I} - \boldsymbol{D}^{-1}$ $\boldsymbol{L}(q)$ $-$ $\frac{\boldsymbol{J}\boldsymbol{D}}{\sum_{i}\pi_{i}} ) \leq 1$, 
which in turn is equivalent to following, 
\begin{equation}
\label{eq:Eq20171115740}
\begin{gathered}
-\boldsymbol{I}
<
\boldsymbol{D}^{\frac{1}{2}}
\left( \boldsymbol{P}  -  \frac{\boldsymbol{J}\boldsymbol{D}}{\sum_{i}\pi_{i}} \right)
\boldsymbol{D}^{-\frac{1}{2}}
=
\left( \boldsymbol{I} - \boldsymbol{D}^{-\frac{1}{2}} \boldsymbol{L}\left( q \right) \boldsymbol{D}^{-\frac{1}{2}} \right)  -  \widetilde{\boldsymbol{J}}
<
\boldsymbol{I}
\end{gathered}
\end{equation}
where
$\widetilde{\boldsymbol{J}}  =  \frac{ \boldsymbol{D}^{\frac{1}{2}} \boldsymbol{J} \boldsymbol{D}^{\frac{1}{2}}  }{ \sum_{i} \pi_{i} }$
with
$\boldsymbol{J}  =  \boldsymbol{1} \boldsymbol{1}^{T}$.
The FMRMC problem (\ref{eq:Eq201712161073}) in terms of the symmetric Laplacian can be interpreted as the following optimization problem,
\begin{equation}
\label{eq:Eq201712231257} %
\begin{aligned}
\hspace{-10pt}
\min\limits_{ \{ q_{i,j} | \{i,j\} \in \mathcal{E}_{V} \} }
\;\;
&\mu\left( \boldsymbol{I} - \boldsymbol{D}^{-1} \boldsymbol{L}\left( q \right) \right),
\\
s.t.
\qquad\;\;
&
q_{i,j} = q_{j,i} \geq 0,
\;\;
q_{i,j} = 0 \; \text{for} \; \{i,j\} \notin \mathcal{E},
\\ &
\boldsymbol{\pi}^{T} \left( \boldsymbol{I} - \boldsymbol{D}^{-1} \boldsymbol{L}\left( q \right) \right)  =  \boldsymbol{\pi}^{T},
\;\; %
\left( \boldsymbol{I} - \boldsymbol{D}^{-1} \boldsymbol{L}\left( q \right) \right) \boldsymbol{1}
=
\boldsymbol{1},
\\
&
- \sum\nolimits_{k \neq i} q_{i,k} + \pi_{i} \geq 0 \;\; \text{for} \;\;
i \in \mathcal{V}  %
\end{aligned}
\end{equation}
\section{Main Results}
\label{sec:MainResults}
In this section, we present the main results of the paper regarding the FMRMC problem as formulated in (\ref{eq:Eq201712231257}) over a friendship graph.
These results include the weights and $SLEM$ corresponding to the Pareto frontier and the minimum points (if it exists) for given equilibrium distribution and weights over edges between friends $\left( \mathcal{E}_{F} \right)$.
Proofs and detailed solutions are deferred to Section \ref{sec:ProofofMainResults}.

For
$m \geq 3$ and
$\Pi \leq 2\pi_{0}$,
with 
$\Pi = \sum_{i=1}^{m} \left( \pi_{2i-1} + \pi_{2i} \right)$
and
\begin{equation}
\label{eq:corner-constraints-star-interior}
\begin{gathered}
0
\leq
q_{2i-1,2i}
\leq
\left(  \Pi  / \left(  2 \pi_{0} + \Pi \right) \right) 
\min\{ \pi_{2i-1} , \pi_{2i} \}
\end{gathered}
\end{equation}
the optimal weight and the optimal $SLEM$ are as below,
\begin{equation}
\label{eq:Eq201807237981}
\begin{gathered}
q_{0,2i-1} %
=   2 \pi_{0} \pi_{2i-1} / \left( 2 \pi_{0}  + \Pi \right),
\;\;\;\; %
q_{0,2i} %
=  2 \pi_{0} \pi_{2i} / \left( 2 \pi_{0}  + \Pi \right),
\end{gathered}
\end{equation}
\begin{equation}
\label{eq:Eq201807237991}
\begin{gathered}
SLEM %
=   \Pi   /  \left(    2 \pi_{0}  + \Pi    \right),
\end{gathered}
\end{equation}
For
$m \geq 3$ and
$\Pi \geq 2\pi_{0}$,
and
\begin{equation}
\label{eq:corner-constraints-star-non-interior}
\begin{gathered}
0
\leq
q_{2i-1,2i} %
\leq
\left( \left(  \Pi - \pi_{0}  \right)  /  \Pi  \right) 
\min\{ \pi_{2i-1}, \pi_{2i} \} %
\end{gathered}
\end{equation}
the optimal weight and the optimal $SLEM$ are as below,
\begin{equation}
\label{eq:Eq201807278235}
\begin{gathered}
q_{0,2i-1} %
=
\pi_{2i-1} \pi_{0}  /  \Pi,    %
\;\;\;\;\; %
q_{0,2i}    %
=
\pi_{2i} \pi_{0} / \Pi    %
\end{gathered}
\end{equation}
\begin{equation}
\label{eq:Eq201807278283}
\begin{gathered}
SLEM %
=
\left(  \Pi - \pi_{0}  \right)  /  \Pi  ,
\end{gathered}
\end{equation}
where (\ref{eq:Eq201807237981}) and (\ref{eq:Eq201807278235}) holds for $i=1,...,m$.

\begin{lemma} \label{lemma:star-high-q-given}
Values of 
$q_{2i-1,2i}$    %
greater than the limits provided in
(\ref{eq:corner-constraints-star-interior}) and (\ref{eq:corner-constraints-star-non-interior}) %
result in optimal value of $SLEM$
greater than the value of $SLEM$ obtained for
values of 
$q_{2i-1,2i}$    %
that satisfy
(\ref{eq:corner-constraints-star-interior}) and (\ref{eq:corner-constraints-star-non-interior}). %
\end{lemma}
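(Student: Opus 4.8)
The plan is to prove the lemma in two stages: first identify the stated thresholds with the exact point at which the optimal center--edge weights become infeasible, and then show that pushing $q_{2i-1,2i}$ past this point forces the $SLEM$ strictly upward. For the first stage I would write the leaf self-loop constraint $-\sum_{k\neq i}q_{i,k}+\pi_i\geq 0$ from (\ref{eq:Eq201712231257}) at the two leaves of blade $i$, namely $q_{0,2i-1}+q_{2i-1,2i}\leq\pi_{2i-1}$ and $q_{0,2i}+q_{2i-1,2i}\leq\pi_{2i}$. Substituting the optimal weights (\ref{eq:Eq201807237981}) in the regime $\Pi\leq 2\pi_0$ [resp. (\ref{eq:Eq201807278235}) in the regime $\Pi\geq 2\pi_0$] and isolating $q_{2i-1,2i}$ reproduces precisely the bounds (\ref{eq:corner-constraints-star-interior}) [resp. (\ref{eq:corner-constraints-star-non-interior})], the active leaf being the one that attains $\min\{\pi_{2i-1},\pi_{2i}\}$. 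Hence for $q_{2i-1,2i}$ strictly above the threshold the optimal weights of the main results violate feasibility, and every feasible assignment must take the binding center weight strictly below its optimal value.

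For the second stage I would track the two branches of $\mu(\boldsymbol{P})=\max\{\lambda_2(\boldsymbol{P}),-\lambda_N(\boldsymbol{P})\}$ in (\ref{eq:Eq20171216227}) through the variational characterization
\[
1-\lambda_2(\boldsymbol{P})=\min_{\sum_i \pi_i\phi_i=0}\frac{\sum_{\{i,j\}\in\mathcal{E}}q_{i,j}(\phi_i-\phi_j)^2}{\sum_i\pi_i\phi_i^2},
\]
which shows that $\lambda_2(\boldsymbol{P})$ is nonincreasing in every $q_{i,j}$; by first-order perturbation its derivative in the binding center weight is $-(\phi_0-\phi_{j})^2$ evaluated on the slow mode, and this is strictly negative because the slow mode of the star-type optimum separates the center from the leaves. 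Thus the forced reduction of the center weight pushes $\lambda_2(\boldsymbol{P})$ strictly above the optimal value (\ref{eq:Eq201807237991}) [resp. (\ref{eq:Eq201807278283})]. Simultaneously, raising $q_{2i-1,2i}$ adds the rank-one positive-semidefinite term $q_{2i-1,2i}(\boldsymbol{e}_{2i-1}-\boldsymbol{e}_{2i})(\boldsymbol{e}_{2i-1}-\boldsymbol{e}_{2i})^{T}$ to $\boldsymbol{L}(q)$ in (\ref{eq:Eq20171112316}), which only increases $-\lambda_N(\boldsymbol{P})$ through the antisymmetric mode of that blade. As in the explicit computation behind the main results, these two branches meet exactly on the leaf self-loop boundary $q_{0,j}+q_{2i-1,2i}=\pi_j$, so the minimizing assignment sits on this boundary and yields $\mu(\boldsymbol{P})$ strictly larger than the threshold value.

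The final step, and the part I expect to be the main obstacle, is to rule out recovering the optimal $SLEM$ by reallocating the remaining free center weights: the forced reduction at one leaf might in principle be offset by enlarging the other $q_{0,k}$. Here the two regimes must be separated. When $\Pi\leq 2\pi_0$ the center self-loop constraint $-\sum_{k}q_{0,k}+\pi_0\geq 0$ has slack at the optimum, so reallocation is not immediately blocked and the argument must rely on the eigenvalue monotonicity above together with the uniqueness of the optimal weights underlying the main results; when $\Pi\geq 2\pi_0$ the center self-loop is already tight, so no other center weight can be increased and the reduction is uncompensated by inspection. To complete the plan I would reuse the boundary (KKT) analysis of Section \ref{sec:ProofofMainResults} with the leaf self-loop now an active constraint, and read off that the resulting optimal $SLEM$ is a strictly increasing function of the excess of $q_{2i-1,2i}$ over its threshold, which gives the claim.
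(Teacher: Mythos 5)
Your proposal follows essentially the same route as the paper's own proof: that proof likewise identifies the thresholds in (\ref{eq:corner-constraints-star-interior}) and (\ref{eq:corner-constraints-star-non-interior}) with the point at which the leaf constraints (\ref{eq:corner-constraints-star-raw}) become active, so that the binding center weight is forced to $q_{0,2j-1}=\pi_{2j-1}-q_{2j-1,2j}$, strictly below the unconstrained optimum of (\ref{eq:Eq201807237981}) or (\ref{eq:Eq201807278235}), and then concludes that the resulting $SLEM$ must be larger. The ``main obstacle'' you flag --- ruling out compensation by reallocating the remaining free center weights, i.e., the strictness/uniqueness of the optimizer --- is precisely the step the paper dispatches in a single sentence (``since the value of $q_{0,2j-1}$ is different than the optimal ones \ldots\ the resultant $SLEM$ is greater''), so your more cautious treatment via the variational monotonicity of $\lambda_{2}$ and a re-run of the KKT analysis with the leaf constraint active is, if anything, more complete than the published argument rather than divergent from it.
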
	
The proof of this lemma is provided in Appendix \ref{lemma:star-high-q-given-proof}.
\begin{remark}
From Lemma \ref{lemma:star-high-q-given}, it can be concluded that the optimal values of $SLEM$ in (\ref{eq:Eq201807237991}) and (\ref{eq:Eq201807278283}) are the smallest values of $SLEM$ for all values of 
$q_{2i-1,2i}$    %
for $i=1,...,m$.
\end{remark}
\begin{theorem}
\label{theorem1209}
For friendship topologies with $m \geq 3$, the Pareto frontier of problem (\ref{eq:Eq20181027427}) is a single point corresponding to 
$q_{2i-1,2i} = 0$ for $i=1,...,m$, %
i.e., this point is the minimum point of the optimization problem (\ref{eq:Eq20181027427}).
\end{theorem}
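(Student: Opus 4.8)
The plan is to exhibit the point with $q_{2i-1,2i}=0$ for all $i$ as a \emph{minimum} element of the feasible set of (\ref{eq:Eq20181027427}), because once a minimum element is shown to exist the Pareto frontier is forced to collapse onto it.

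First I would recast the friend-edge objectives in terms of the weight variables. By the definition (\ref{eq:Eq20171223512}) we have $\boldsymbol{P}_{2i-1,2i} = q_{2i-1,2i}/\pi_{2i-1}$ with each $\pi_{2i-1}>0$ fixed, so minimizing a transition probability among friends is the same as minimizing the corresponding weight $q_{2i-1,2i}$. Setting $q_{2i-1,2i}=0$ therefore drives every one of the last $m$ components of $\boldsymbol{f}_{0}(\boldsymbol{P})$ down to $0$, which is the smallest value each can take since transition probabilities are nonnegative. This disposes of all objectives except the first, $\mu(\boldsymbol{P})$.

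The crux is to show that $q_{2i-1,2i}=0$ simultaneously minimizes $\mu(\boldsymbol{P})$ over the entire feasible set. Let $\boldsymbol{P}(1)$ denote the $SLEM$-optimal solution at $q_{2i-1,2i}=0$ (the star weights of (\ref{eq:Eq201807237981}) or (\ref{eq:Eq201807278235})), and let $\boldsymbol{P}(2)$ be an arbitrary feasible point with friend-edge weights $q^{(2)}_{2i-1,2i}=\pi_{2i-1}\boldsymbol{P}_{2i-1,2i}(\boldsymbol{P}(2))$. Since $\boldsymbol{P}(2)$ is feasible for the single-objective problem (\ref{eq:Eq201712231257}) under those fixed friend-edge weights but need not be optimal over the star edges $\mathcal{E}_{V}$, its $SLEM$ is at least the optimal $SLEM$ attainable for the weights $q^{(2)}_{2i-1,2i}$. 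I would then invoke the main results: by (\ref{eq:Eq201807237991}) and (\ref{eq:Eq201807278283}) this optimal $SLEM$ is independent of the friend-edge weights whenever they respect the bounds (\ref{eq:corner-constraints-star-interior}) or (\ref{eq:corner-constraints-star-non-interior}), and by Lemma \ref{lemma:star-high-q-given} any weights violating those bounds give a strictly larger optimal $SLEM$. Hence the optimal $SLEM$, as a function of the friend-edge weights, attains its global minimum at $q_{2i-1,2i}=0$, yielding $\mu(\boldsymbol{P}(2)) \geq \mu(\boldsymbol{P}(1))$.

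Combining the two bounds gives $\boldsymbol{f}_{0}(\boldsymbol{P}(1)) \leq \boldsymbol{f}_{0}(\boldsymbol{P}(2))$ componentwise for every feasible $\boldsymbol{P}(2)$, which is precisely the definition of $\boldsymbol{P}(1)$ being the minimum element. To finish I would appeal to the standard fact that a minimum element, when it exists, is the unique Pareto optimal point (see \cite{BoydConvexBook}): any minimal element $\boldsymbol{P}(3)$ satisfies $\boldsymbol{f}_{0}(\boldsymbol{P}(1)) \leq \boldsymbol{f}_{0}(\boldsymbol{P}(3))$, and minimality then forces equality, so the frontier is the single point $q_{2i-1,2i}=0$. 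The only substantive ingredient beyond this bookkeeping is the monotonicity of the optimal $SLEM$ in the friend-edge weights — that activating friend edges can never lower the $SLEM$ below its star value — and this is exactly what the preceding formulas together with Lemma \ref{lemma:star-high-q-given} already establish, so I expect no obstacle remaining beyond carefully aligning these facts with the minimum-element definition.
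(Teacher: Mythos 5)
Your proposal is correct and follows essentially the same route as the paper: the paper's own proof simply cites (\ref{eq:Eq201807237981}), (\ref{eq:Eq201807237991}), (\ref{eq:Eq201807278235}), (\ref{eq:Eq201807278283}) together with Lemma \ref{lemma:star-high-q-given} to conclude that the optimal $SLEM$ is constant in the friend-edge weights inside the bounds and strictly worse outside them, so the point $q_{2i-1,2i}=0$ dominates every feasible point. Your write-up merely makes explicit the componentwise-dominance and minimum-element bookkeeping that the paper leaves implicit.
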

\begin{proof}
From the results represented in (\ref{eq:Eq201807237981}), (\ref{eq:Eq201807237991}),  (\ref{eq:Eq201807278235}) and (\ref{eq:Eq201807278283}) for the optimization problem (\ref{eq:Eq201712161073}), it can be concluded that for $m \geq 3$ the Pareto frontier of the optimization problem (\ref{eq:Eq20181027427}) is a single point corresponding to 
$q_{2i-1,2i} = 0$    %
for $i=1,...,m$.
Thus, this point is the minimum point of the 
problem (\ref{eq:Eq20181027427}).
\end{proof}
\begin{remark}
\label{remark:star}
The optimal results presented in (\ref{eq:Eq201807237981}), (\ref{eq:Eq201807237991}),  (\ref{eq:Eq201807278235}) and (\ref{eq:Eq201807278283}) are same as those of the star topology which represents the minimum spanning tree of the friendship graph.
\end{remark}
From remark \ref{remark:star}, it is obvious that the weights or the transition probabilities over the edges
between friends, i.e., $\mathcal{E}_{F}$ $=$ $\{ (2i-1$, $2i) |$ $i=1,...,m \}$
have no impact on the optimal $SLEM$ (i.e., the convergence rate of the reversible Markov chain problem to its equilibrium distribution),
as long as they are smaller than the upper limits provided in (\ref{eq:corner-constraints-star-interior})
and
(\ref{eq:corner-constraints-star-non-interior}).

\begin{remark}
\label{remark:201902032012}
From    %
(\ref{eq:Eq201807237991}) 
and    %
(\ref{eq:Eq201807278283}), 
it is obvious that for $\Pi \leq 2 \pi_{0}$ (i.e., the equilibrium distribution of central vertex is more than half of the rest of vertices), the convergence rate is faster and it is smaller than $\frac{1}{2}$.
While, for the case of $\Pi \geq 2 \pi_{0}$, it is opposite, i.e.,  the optimal convergence rate is slower and it is larger than $\frac{1}{2}$.
Another interesting point is regarding the optimal transition probabilities of the edges connected to the central vertex.
From
(\ref{eq:Eq201807237981}) 
and    %
(\ref{eq:Eq201807278235}), 
it is obvious that the optimal transition probabilities on the edges heading towards the central vertex (i.e., $\boldsymbol{P}_{i,0}$ for $i=1,...,2m$) are all equal to each other.
While the same does not hold true for the edges leaving the central vertex (i.e., $\boldsymbol{P}_{0,i}$ for $i=1,...,2m$).
\end{remark}
\subsection{Case of $m=2$}
For
$m=2$,
and equilibrium distributions that satisfy
\begin{equation}
\label{eq:Eq201810142587}
\begin{gathered}
\pi_{0}^{2}    \geq    \left( \pi_{1} + \pi_{2} \right)    \left( \pi_{3} + \pi_{4} \right)
\end{gathered}
\end{equation}
the optimal weights and the optimal $SLEM$ are as below,
\begin{equation}
\label{eq:Eq201810132536}
\begin{gathered}
q_{0,2i+j-2)} = 
\left( \pi_{0} \pi_{2i+j-2} \right) / \left( \pi_{0} + \pi_{2i-1} + \pi_{2i} \right)
\; \text{for} \; i,j=1,2,
\end{gathered}
\end{equation}
\begin{equation}
\label{eq:Eq201810132576}
\begin{gathered}
SLEM = 
\sqrt{ \frac{  \left( \pi_{1} + \pi_{2} \right) \left( \pi_{3} + \pi_{4} \right)  }{  \left( \pi_{0} + \pi_{1} + \pi_{2} \right) \left( \pi_{0} + \pi_{3} + \pi_{4} \right)  } }
\end{gathered}
\end{equation}
if
and $q_{1,2}$ and $q_{3,4}$ are within 
following boundaries
\begin{subequations}
\label{eq:201810142672}
\begin{gather}
A_{1}    %
-  s
<
q_{1,2}    %
<
A_{1}     %
\min\{ \pi_{1}, \pi_{2} \},    %
\label{eq:201810142672a}
\\
A_{2}    %
-  s
<
q_{3,4}    %
<
A_{2}    %
\min\{ \pi_{3}, \pi_{4} \}.    %
\label{eq:201810142672b}
\end{gather}
\end{subequations}
where
$A_{1} = \frac{  \pi_{1} + \pi_{2}  }{  \pi_{0} + \pi_{1} + \pi_{2}  }$,
$A_{2} = \frac{  \pi_{3} + \pi_{4}  }{  \pi_{0} + \pi_{3} + \pi_{4}  }$,
and
$s$ is the optimal $SLEM$ as provided in (\ref{eq:Eq201810132576}).
Note that for 
$\pi_{1} + \pi_{2} > \pi_{3} + \pi_{4}$, 
the lower bound in
(\ref{eq:201810142672b}) is negative and only the lower bound in (\ref{eq:201810142672a}) imposes a lower limit on 
$q_{1,2}$.    %
Similarly for 
$\pi_{1}$ $+$ $\pi_{2}$ $<$ $\pi_{3}$ $+$ $\pi_{4}$, 
the lower bound in
(\ref{eq:201810142672a}) is negative and only the lower bound in (\ref{eq:201810142672b}) imposes a lower limit on 
$q_{3,4}$.    %
For the equilibrium distribution where 
$\pi_{1} + \pi_{2} = \pi_{3} + \pi_{4}$, 
both the lower bounds in (\ref{eq:201810142672}) are equal to zero.

For equilibrium distributions that does not satisfy (\ref{eq:Eq201810142587}),
the optimal weights and the optimal $SLEM$ are as below,
\begin{equation}
\label{eq:Eq201810162962}
\begin{gathered}
q_{0,2i+j-2} =
\pi_{2i+j-2} \pi_0 ( \pi_0 + 2 ( \pi_{2i^{'}-1} + \pi_{2i^{'}} ) )  /  A_{0} 
\end{gathered}
\end{equation}
\begin{equation}
\label{eq:Eq201810142678}
\begin{gathered}
SLEM %
=  
\left(  4 \left( \pi_{1} + \pi_{2} \right) \left( \pi_{3} + \pi_{4} \right) - \pi_{0}^{2}  \right)  /  A_{0}
\end{gathered}
\end{equation}
where
$A_{0} = \pi_0 ( \pi_{1} + \pi_{2}  +  \pi_{3} + \pi_{4}  )  +  4 ( \pi_{1} + \pi_{2} ) ( \pi_{3} + \pi_{4} )$
and 
in (\ref{eq:Eq201810162962}), $i,j=1,2$ and $i^{'} = (i + 1 ) \; mod \; 2$. %
The results in (\ref{eq:Eq201810162962}) and  (\ref{eq:Eq201810142678}) hold true,
if
and 
$q_{1,2}$ and $q_{3,4}$
are within the following boundaries
\begin{subequations}
\label{eq:201810163031}    %
\begin{gather}
q_{1,2}     \geq		\frac		{  \pi_{0} \left( \left( \pi_{1} + \pi_{2} \right) - \left( \pi_{3} + \pi_{4} \right) \right) }{  A_{0}  }
\label{eq:201810163031a}
\\
q_{1,2}   \leq		\left( 1 -  \frac{  \pi_0 \left( \pi_0 + 2 \left( \pi_{3} + \pi_{4} \right) \right)  }{  A_{0}  } \right)  \min\{ \pi_{1}, \pi_{2} \},
\label{eq:201810163031b}
\\
q_{3,4}  \geq	\frac	{  \pi_{0} \left( \left( \pi_{3} + \pi_{4} \right) - \left( \pi_{1} + \pi_{2} \right) \right)  }{  A_{0}  }
\label{eq:201810163031c}
\\
q_{3,4}		\leq		\left( 1 - \frac{  \pi_0 \left( \pi_0 + 2 \left( \pi_{1} + \pi_{2} \right) \right)   }{  A_{0}  } \right)	\min\{ \pi_{3}, \pi_{4} \},
\label{eq:201810163031d}
\end{gather}
\end{subequations}
From (\ref{eq:201810163031a}) and (\ref{eq:201810163031c}), it is obvious that
for 
$\pi_{1} + \pi_{2} > \pi_{3} + \pi_{4}$, 
the bound in
(\ref{eq:201810163031c}) is negative and only (\ref{eq:201810163031a}) imposes a lower limit on 
$q_{1,2}$.    %
Similarly for 
$\pi_{1} + \pi_{2} < \pi_{3} + \pi_{4}$, 
the bound in
(\ref{eq:201810163031a}) is negative and only (\ref{eq:201810142672b}) imposes a lower limit on 
$q_{3,4}$.    %
\begin{remark}
\label{remark:m2-equal-sides}
For the case of equilibrium distributions with 
$\pi_{1} + \pi_{2} = \pi_{3} + \pi_{4}$, 
the lower bounds in 
(\ref{eq:201810142672}) 
and 
(\ref{eq:201810163031})  %
are equal to zero and therefore, the Pareto frontier reduces to a single point which in turn is the minimum point of problem (\ref{eq:Eq20181027427}).
This is true for both type of equilibrium distributions that satisfy (\ref{eq:Eq201810142587}) and those that not satisfy (\ref{eq:Eq201810142587}).
\end{remark}
Similar to the case of $m \geq 3$ (Lemma \ref{lemma:star-high-q-given}), for $m=2$ it can be shown that
values of 
$q_{2j+1,2j}$    %
for $j=1,2$ greater than the upper limits provided in (\ref{eq:201810142672}) and (\ref{eq:201810163031}) result larger optimal $SLEM$ compared to those obtained for values of 
$q_{2j-1,2j}$    %
for $j=1,2$ within the boundaries given in (\ref{eq:201810142672}) and (\ref{eq:201810163031}).
In general, providing the closed-form formula for the Pareto frontier in the case of equilibrium distributions with
$\pi_{1}$ $+$ $\pi_{2}$ $\neq$ $\pi_{3}$ $+$ $\pi_{4}$
can be cumbersome. %
In the following, we provide two examples for friendship graph with $m=2$, where we have provided the Pareto frontier.
\begin{example}
\label{example:1}
Consider the friendship graph with $m=2$ and the equilibrium distribution
$\pi_{0} = \pi_{4} = 3$, $\pi_{1}$ $=$ $\pi_{2} =$ $\pi_{3} = 1$.
This equilibrium distribution satisfies (\ref{eq:Eq201810142587}).
The lower bound in (\ref{eq:201810142672b}) imposes the lower limit
$\frac{ 3 }{ 7 } - \frac{ 3 }{ \sqrt{70} }$
on 
$q_{3,4}$    %
and
the optimal values of $s$ for
$0 \leq q_{3,4} \leq \frac{ 3 }{ 7 } - \frac{ 3 }{ \sqrt{70} }$ 
is 
$s = (  ( 3 - 7  q_{3,4}  )  /  4  )  - ( 1 / 6.9282 ) \sqrt{  35 q_{3,4}^{2}  -  30 q_{3,4}  +  3  } $.
This curve (as depicted in Fig \ref{fig:ParetoExample_1}) is the Pareto Frontier for this example.
\end{example}
\begin{figure}
\centering
\begin{subfigure}[b]{0.325\hsize}
\includegraphics[width=\hsize]{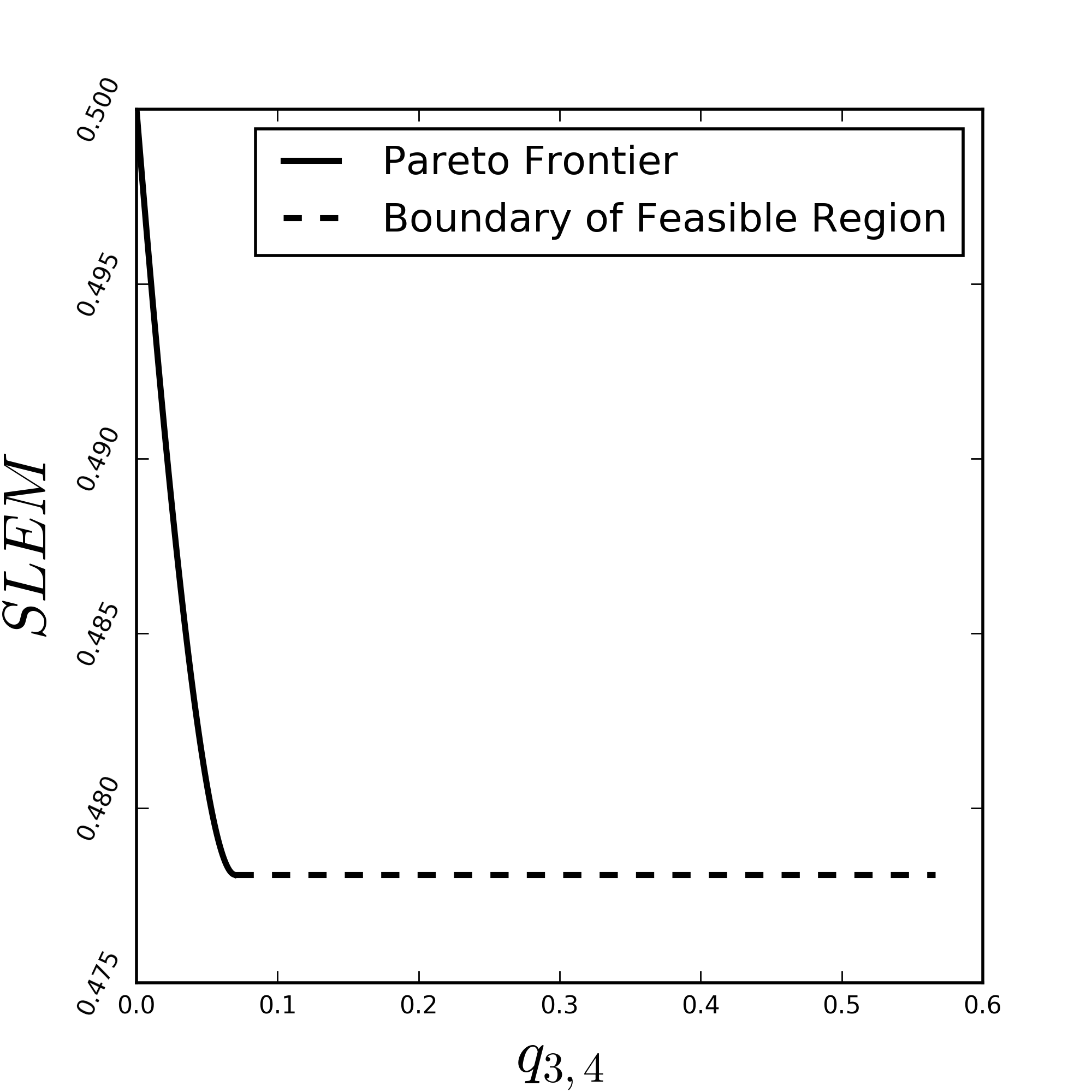}
\caption{}
\label{fig:ParetoExample_1}
\end{subfigure}
\hspace{-14pt}
\begin{subfigure}[b]{0.325\hsize}
\includegraphics[width=\hsize]{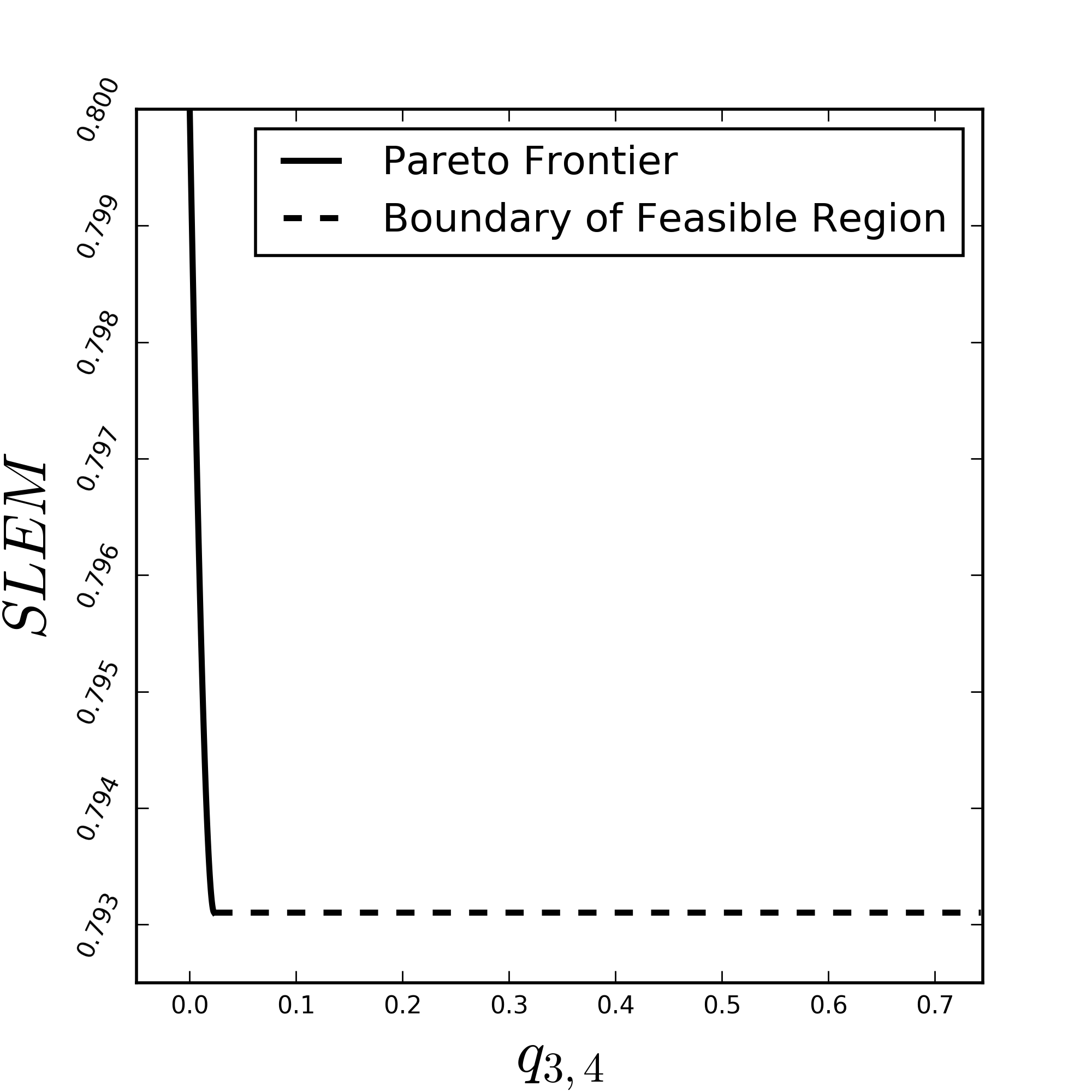}
\caption{}
\label{fig:ParetoExample_2}
\end{subfigure}
\caption{Pareto frontier of (a) Example \ref{example:1} and (b) Example \ref{example:2}.}
\end{figure}

\begin{example}
\label{example:2}
Consider the friendship graph with $m=2$ and the equilibrium distribution
$\pi_{4} = 2$, $\pi_{0} = \pi_{1} = \pi_{2} = \pi_{3} = 1$.
This equilibrium distribution does not satisfy (\ref{eq:Eq201810142587}).
The lower bound in (\ref{eq:201810163031c}) imposes the lower limit
$\frac{ 2 }{ 87 }$
on 
$q_{3,4}$    %
and
the optimal values of $s$ for
$0 \leq q_{3,4} \leq \frac{ 2 }{ 87 }$
is as below,
$s = (  18$ $-$ $35 q_{3,4} - \sqrt{ 145 q_{3,4}^{2} - 100 q_{3,4} + 4 }  ) /  20  $.
This curve (as depicted in Fig \ref{fig:ParetoExample_2}) is the Pareto Frontier for this example.
\end{example}
Note that in both examples above, the optimal value of $s$ is independent of 
$q_{1,2}$.    %

\subsection{Case of $m=1$}
\label{sec:main-results-m1}
In the case of friendship graph with $m=1$, the graph is reduced to a triangle.
We denote the central vertex by index $3$ and other two vertices by indices $1$ and $2$, where $q_{1,2}$ is given.
\subsubsection{Equilibrium distributions with $\pi_{3}^{2} > \pi_{1} \pi_{2}$}
For this case,
independent of the values of $q_{1,2}$,
the optimal weights and $SLEM$ are as below,
\begin{equation}
\label{eq:201809151209-main-result}
\begin{gathered}
q_{1,3} = 
\pi_{3} ( \pi_{1} - q_{1,2} )  / \left(  \pi_{1} + \pi_{3}  \right),
\;\; %
q_{2,3} = 
\pi_{3} ( \pi_{2} - q_{1,2} )  /  \left(  \pi_{2} + \pi_{3}  \right),
\end{gathered}
\end{equation}
the optimal weights and the optimal $SLEM$ are as below,
\begin{equation}
\label{eq:201809151244-main-result}
\begin{gathered}
s = 
\left(  \left| \pi_{1} \pi_{2} - \left( \pi_{1} + \pi_{2} + \pi_{3} \right) q_{1,2} \right|  \right)    /
\sqrt{ \pi_{1} \pi_{2} \left( \pi_{1} + \pi_{3} \right)  \left( \pi_{2} + \pi_{3} \right) }
\end{gathered}
\end{equation}
It is obvious that the Pareto frontier for
this case
is the line
$ s
=
(  \pi_{1} \pi_{2} - ( \pi_{1} + \pi_{2} + \pi_{3} ) q_{1,2}  )     /
\sqrt{ \pi_{1} \pi_{2} \left( \pi_{1} + \pi_{3} \right)  \left( \pi_{2} + \pi_{3} \right) } $
for
$  0  \leq  q_{1,2}  \leq  \frac{  \pi_{1} \pi_{2}  }{  \pi_{1} + \pi_{2} + \pi_{3}  }  $,
which is depicted in Fig. \ref{fig:Friendship-m1-p3geqp1p2-Pareto-Frontier}.
\begin{figure}
\centering
\begin{subfigure}[b]{0.4\hsize}
\includegraphics[width=\hsize]{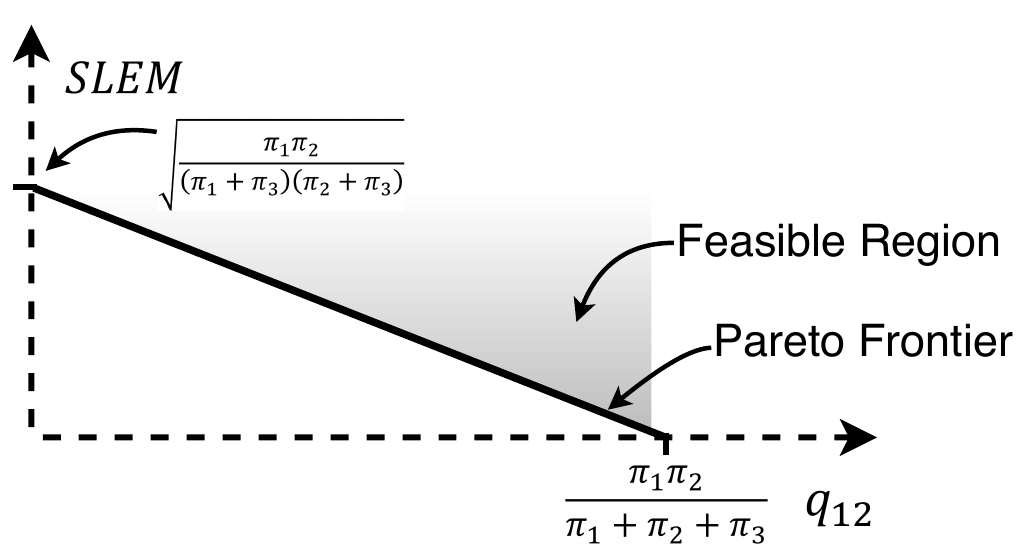}
\caption{}
\label{fig:Friendship-m1-p3geqp1p2-Pareto-Frontier}
\end{subfigure}
\begin{subfigure}[b]{0.4\hsize}
\includegraphics[width=\hsize]{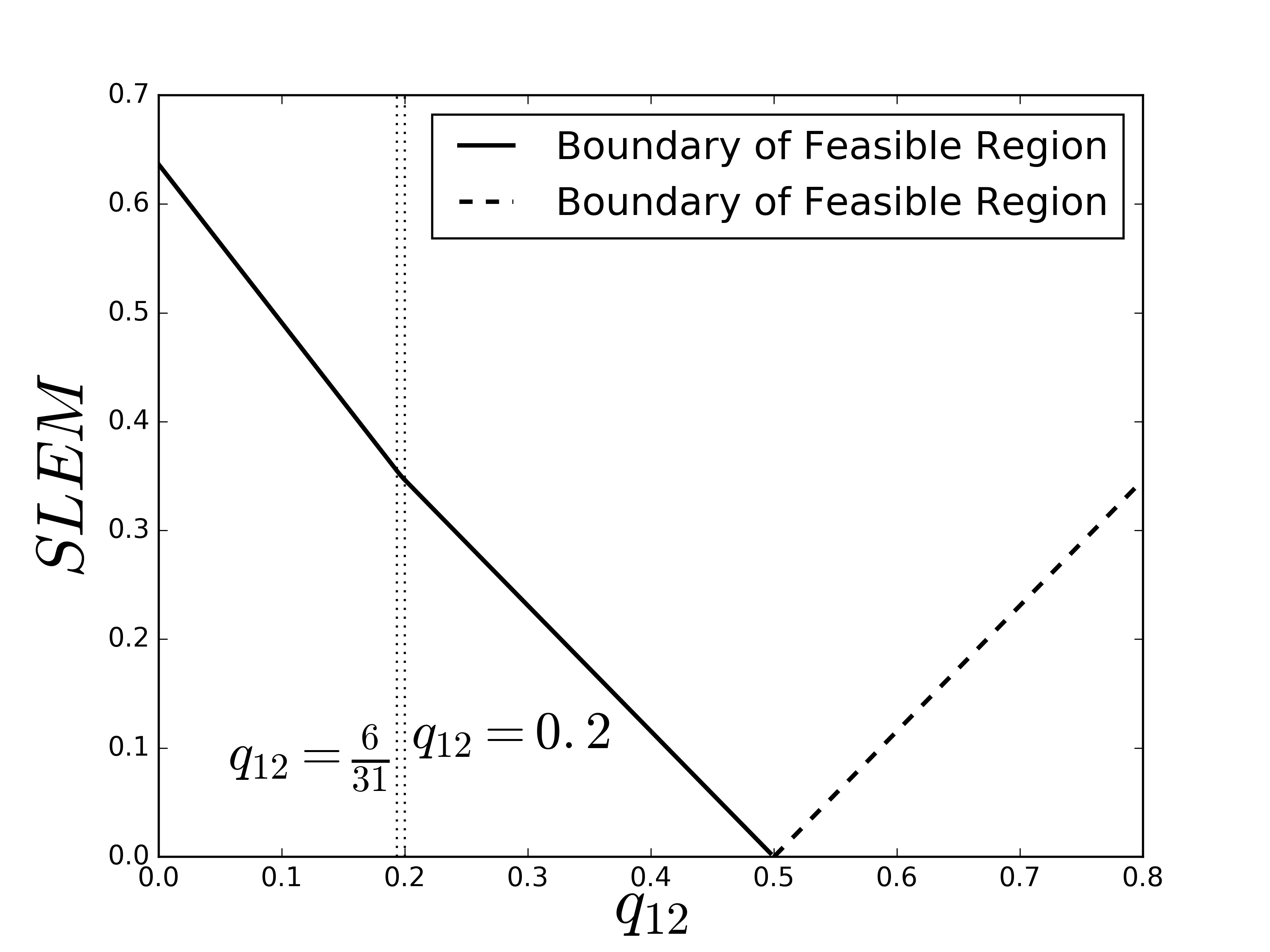}
\caption{}
\label{fig:ParetoExample_3}
\end{subfigure}
\caption{Pareto frontier of friendship graph with $m=1$ blade for (a) $\pi_{3}^{2} \geq \pi_{1} \pi_{2}$ and (b) $\pi_{3}^{2} < \pi_{1} \pi_{2}$ (Example \ref{example:3}).}
\end{figure}
\subsubsection{Equilibrium distributions with $\pi_{3}^{2} < \pi_{1} \pi_{2}$}
For this case, the closed-form formulas for the optimal weights and the $SLEM$ is obtained for three different ranges of $q_{1,2}$.
For $q_{1,2}$ satisfying the following
\begin{equation}
\label{eq:201809302893-main-result}
\begin{gathered}
q_{1,2}  \geq  
\left(  \pi_{1} \pi_{2} - \pi_{3}^{2}  \right) / \left(  2 \pi_{3} + \pi_{1} + \pi_{2}  \right).
\end{gathered}
\end{equation}
the optimal weights and the $SLEM$ are same as (\ref{eq:201809151209-main-result}) and (\ref{eq:201809151244-main-result}).
For the other two ranges of $q_{1,2}$, the closed-form formula for $SLEM$ is cumbersome and we omit presenting it for the general case.
Instead, in the following,
first we address the case of $\pi_{1} = \pi_{2}$, where the three ranges are reduced to two
and we have provided closed-form formulas for the optimal weights and $SLEM$.
Then in an example,
we address the case of $\pi_{1} = 2$, $\pi_{2} = \pi_{3} = 1$, where all three ranges of $q_{1,2}$ are present.

\subsubsection{Case of $\pi_{1} = \pi_{2}$}
\label{sec:pi1=pi2}
For equilibrium distributions with $\pi_{1} = \pi_{2}$,
for
$q_{1,2} \geq \frac{ \pi_{1} - \pi_{3} }{ 2 }$,
the optimal value of $s$, $q_{1,3}$ and $q_{2,3}$ are as below,
\begin{equation}
\label{eq:201809303083-main-result}
\begin{gathered}
s  =  
\left| \pi_{1}^{2} - \left( 2 \pi_{1} + \pi_{3} \right) q_{1,2} \right|   /  \left(   \pi_{1} \left( \pi_{2} + \pi_{3} \right)   \right)
\end{gathered}
\end{equation}
\begin{equation}
\label{eq:201809303092-main-result}
\begin{gathered}
q_{1,3} = q_{2,3}  = 
\left(  \pi_{3} ( \pi_{1} - q_{1,2} )  \right) / \left(  \pi_{1} + \pi_{3}  \right),
\end{gathered}
\end{equation}
These results are obtained by setting $\pi_{1} = \pi_{2}$ in (\ref{eq:201809151209-main-result}), (\ref{eq:201809151244-main-result}),  and (\ref{eq:201809302893-main-result}).
For $q_{1,2} \leq \left( \pi_{1} - \pi_{3} \right) / 2$,
the optimal value of $s$, $q_{1,3}$ and $q_{2,3}$ are 
$q_{1,3} = q_{2,3} = \pi_{3}  /  2$,
\begin{equation}
\label{eq:201809303124-main-result}
\begin{gathered}
s = 
\left(  2 \pi_{1} - \pi_{3} - 4 q_{1,2}  \right) / \left(  2 \pi_{1}  \right).
\end{gathered}
\end{equation}
Note that for $q_{1,2} = \left( \pi_{1} - \pi_{3} \right) / 2$, both values of $s$ in (\ref{eq:201809303083-main-result}) and (\ref{eq:201809303124-main-result}) are equal to $\pi_{3} / ( 2 \pi_{1} )$.
\begin{example}
\label{example:3}
Consider the friendship graph with $m=1$ and the equilibrium distribution
$\pi_{1} = 2$, $\pi_{2} = \pi_{3} = 1$.
For $q_{1,2}$ greater than the limit provided in (\ref{eq:201809302893-main-result}), the optimal $SLEM$ can be derived from (\ref{eq:201809151244-main-result}),
i.e., for $0.2 \leq q_{1,2} \leq 0.5$,
the optimal $SLEM$ is 
$s = \left(  1 - 2 q_{1,2}  \right) / \sqrt{ 3 } $.
For
$  6 / 31   \leq  q_{1,2}  \leq  0.2 $,
the optimal $SLEM$ is 
$  s = \sqrt{  18 q_{1,2}^{2}  -  8 q_{1,2}  +  1  }  $.
For $q_{1,2} \leq  6 / 31  $, the optimal $SLEM$ is 
$  s = \left(    7  -  16 q_{1,2}     \right)  /  11  $.
This curve (as depicted in Fig \ref{fig:ParetoExample_3}) is the Pareto Frontier for this example.
\end{example}
\section{Proof of Main Results}
\label{sec:ProofofMainResults}

In this section, we provide the solution procedure of the results presented in Section \ref{sec:MainResults}.
For the case of $m \geq 2$, based on the interlacing property 
first, the optimization problem corresponding to the friendship graph is reduced to that of the star topology, 
which is then addressed via SDP.
For $m = 1$, the problem is addressed directly via SDP.
\subsection{Interlacing}
To analyze the eigenstructure of the transition probability matrix of the friendship graph, in this subsection we use the interlacing property between the transition probability matrices of the friendship graph and the star topology.
Let $\mathcal{G}_{S}$ be the star topology with $m$ branches of length one, where its vertex and edge sets are denoted by
$\mathcal{V}_{S} = \{ (i) | i=0,1,...,m \}$
and
$\mathcal{E}_{S} = \{ (0,i) | i=1,...,m \}$,
respectively,
and the equilibrium distribution of vertex $i$ is $\widetilde{\pi}_{i}$ for $i=0,...,m$.
The transition probability matrix for the star topology is
$\boldsymbol{P}_{S}  =  \boldsymbol{I} - \boldsymbol{D}_{S}^{-1} \times \boldsymbol{L}_{S} (q)$,
where
$\boldsymbol{D}_{S} = diag( \widetilde{\pi}_{0}, \widetilde{\pi}_{1}, ... \widetilde{\pi}_{m} )$
and
$\boldsymbol{L}_{S}(q)$ is the symmetric Laplacian and it can be written as 
$\boldsymbol{L}_{S} (q)  =  \sum_{i=1}^{m}  \widetilde{q}_{0,i} ( \boldsymbol{e}_{0} - \boldsymbol{e}_{i} )  ( \boldsymbol{e}_{0} - \boldsymbol{e}_{i} )^{T}$, 
where
$\boldsymbol{e}_{j}$ for $j=0,...,m$ are 
column vectors with $1$ in the $i$-th position and zero elsewhere.
Let $\boldsymbol{ \Lambda }$ be a $( 2m+1 ) \times ( m+1 )$ matrix defined as %
$\boldsymbol{ \Lambda } = diag\left( 1, \boldsymbol{J}_{2,1}, ..., \boldsymbol{J}_{2,1} \right)$,
where
$\boldsymbol{J}_{2,1}^{T} = \left[ 1, 1 \right]$
is a $2 \times 1$ matrix with both elements equal to one.
Considering $\boldsymbol{ \Lambda }$ as defined above, by imposing the following relation between the transition probability matrices of 
star topology and 
friendship graph,
\begin{equation}
\label{eq:20181009332}
\begin{gathered}
	\boldsymbol{\Lambda} \times \boldsymbol{P}_{S}
	=
	\boldsymbol{P} \times \boldsymbol{\Lambda}
\end{gathered}
\end{equation}
from %
element-wise comparison of %
both sides of (\ref{eq:20181009332}), we have
\begin{subequations}
\label{eq:20181010391}
\begin{gather}
	\left(  \widetilde{q}_{0,i}  /  \widetilde{\pi}_{0}  \right)
	=
	\left( q_{0,2i-1} + q_{0,2i} \right)  /  \pi_{0} ,
	\label{eq:20181010391b}
	\\
	\left(  \widetilde{q}_{0,i}  /  \widetilde{\pi}_{i}  \right)
	=
	q_{0,2i-1}  /  \pi_{2i-1}  
	=
	q_{0,2i}  /  \pi_{2i}  
	\label{eq:20181010391c}
\end{gather}
\end{subequations}
where
(\ref{eq:20181010391})    %
hold for $i=1,...,m$.
Assuming $\widetilde{\pi}_{0} = \pi_{0}$, from (\ref{eq:20181010391b}), we have
$ \widetilde{q}_{0,i} = q_{0,2i-1} + q_{0,2i}$, 
where considering (\ref{eq:20181010391c}), 
$\widetilde{\pi}_{i}   =   \pi_{2i-1}  +  \pi_{2i}$ 
is obtained.
From (\ref{eq:20181010391c}), it is obvious that the ratios
$\frac{ q_{0,2i-1} }{ \pi_{2i-1} }$
and
$\frac{ q_{0,2i} }{ \pi_{2i} }$
are the same where we denote it by $\mu_{i}$, i.e.,
$\mu_{i}  =  \left( q_{0,2i-1} / \pi_{2i-1} \right)  =  \left( q_{0,2i} / \pi_{2i} \right)  =  \left(  \widetilde{q}_{0,i}  /  \widetilde{\pi}_{i}  \right)$
for $i=1,...,m$.
\begin{lemma}
	\label{sec:Lemma-Star_Friendship-Eigenstructure}
	All eigenvalues of the star graph that satisfies (\ref{eq:20181009332}) and
	the constraint $\widetilde{\pi}_{0} = \pi_{0}$    %
	are among the eigenvalues of the friendship graph.
\end{lemma}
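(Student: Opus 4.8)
The plan is to exploit the intertwining relation (\ref{eq:20181009332}), namely $\boldsymbol{\Lambda}\boldsymbol{P}_{S} = \boldsymbol{P}\boldsymbol{\Lambda}$, which is supplied directly by the hypotheses of the lemma, together with the single structural fact that $\boldsymbol{\Lambda}$ has full column rank. First I would record that the block form $\boldsymbol{\Lambda} = diag\left(1, \boldsymbol{J}_{2,1}, \ldots, \boldsymbol{J}_{2,1}\right)$ has $m+1$ columns with pairwise disjoint supports, each column being nonzero; hence the columns are linearly independent and $\boldsymbol{\Lambda}$ is injective as a linear map from $\mathbb{R}^{m+1}$ into $\mathbb{R}^{2m+1}$. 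In particular $\boldsymbol{\Lambda}\boldsymbol{v} = \boldsymbol{0}$ forces $\boldsymbol{v} = \boldsymbol{0}$. This is precisely the lumpability picture: the star $\mathcal{G}_{S}$ is the chain obtained from $\mathcal{F}_{m}$ by merging each friend pair $(2i-1,2i)$ into the single star vertex $(i)$, and the constraint $\widetilde{\pi}_{0} = \pi_{0}$ together with the derived identities $\widetilde{\pi}_{i} = \pi_{2i-1} + \pi_{2i}$ and $\widetilde{q}_{0,i} = q_{0,2i-1} + q_{0,2i}$ is exactly what makes (\ref{eq:20181009332}) hold.

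Next I would take an arbitrary eigenpair $(\lambda, \boldsymbol{v})$ of the star transition matrix, i.e. $\boldsymbol{P}_{S}\boldsymbol{v} = \lambda\boldsymbol{v}$ with $\boldsymbol{v} \neq \boldsymbol{0}$; such a pair exists for every eigenvalue of $\boldsymbol{P}_{S}$, and reversibility guarantees $\lambda$ is real. Applying the intertwining relation gives $\boldsymbol{P}\left(\boldsymbol{\Lambda}\boldsymbol{v}\right) = \boldsymbol{\Lambda}\boldsymbol{P}_{S}\boldsymbol{v} = \lambda\left(\boldsymbol{\Lambda}\boldsymbol{v}\right)$, so $\boldsymbol{\Lambda}\boldsymbol{v}$ is a candidate eigenvector of $\boldsymbol{P}$ for the very same eigenvalue $\lambda$. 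By the injectivity established above, $\boldsymbol{\Lambda}\boldsymbol{v} \neq \boldsymbol{0}$, so it is a genuine eigenvector and $\lambda$ is an eigenvalue of the friendship-graph matrix $\boldsymbol{P}$. Since $\lambda$ was an arbitrary eigenvalue of $\boldsymbol{P}_{S}$, this establishes the claimed spectral containment.

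The only point requiring care is the full-column-rank of $\boldsymbol{\Lambda}$: it is what upgrades the formal identity $\boldsymbol{P}\left(\boldsymbol{\Lambda}\boldsymbol{v}\right) = \lambda\left(\boldsymbol{\Lambda}\boldsymbol{v}\right)$ into an actual eigenvalue statement by excluding the degenerate case $\boldsymbol{\Lambda}\boldsymbol{v} = \boldsymbol{0}$, and it is immediate from the explicit block form. I do not expect a serious obstacle here, as this is the standard consequence of an intertwining (lumping) relation between two Markov chains. If one additionally wished the containment to respect multiplicities, I would invoke that reversibility makes both $\boldsymbol{P}$ and $\boldsymbol{P}_{S}$ similar to symmetric matrices (via conjugation by $\boldsymbol{D}^{1/2}$ and $\boldsymbol{D}_{S}^{1/2}$), hence diagonalizable with full eigenbases; then injectivity of $\boldsymbol{\Lambda}$ carries a basis of eigenvectors of $\boldsymbol{P}_{S}$ to $m+1$ linearly independent eigenvectors of $\boldsymbol{P}$, preserving multiplicities as well.
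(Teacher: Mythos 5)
Your proposal is correct and follows essentially the same route as the paper's own proof in Appendix A: both use the intertwining relation $\boldsymbol{\Lambda}\boldsymbol{P}_{S} = \boldsymbol{P}\boldsymbol{\Lambda}$ to map an eigenpair $(\lambda,\boldsymbol{v})$ of $\boldsymbol{P}_{S}$ to the eigenpair $(\lambda,\boldsymbol{\Lambda}\boldsymbol{v})$ of $\boldsymbol{P}$, and both invoke the full column rank of $\boldsymbol{\Lambda}$ to rule out $\boldsymbol{\Lambda}\boldsymbol{v}=\boldsymbol{0}$. Your added observation about multiplicities via diagonalizability is a small bonus beyond what the paper states, but the core argument is identical.
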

Proof of this lemma is provided in Appendix \ref{sec:Appendix-Lemma-Proof-Star_Friendship-Eigenstructure}.
Consider the set of column vectors
$\{ \boldsymbol{e}_{i} | i = 1, ..., m \}$    %
with $1$ in the index corresponding to vertex $(i) \in \mathcal{V}$ and zero elsewhere.
We define the new set of orthogonal basis as below,
\begin{equation}
\label{eq:20181011335}
\begin{gathered}
	\boldsymbol{e}_{0}  =  \boldsymbol{e}_{0}, 
	\;\;\;\;\;
		\boldsymbol{e}_{i,\mp}  =
		(  \boldsymbol{e}_{2i-1}  /  \sqrt{2}  )
		\mp  
		(  \boldsymbol{e}_{2i}  /  \sqrt{2} ).
\end{gathered}
\end{equation}
The transition probability matrix of the friendship graph in the new basis (\ref{eq:20181011335}) takes 
the block diagonal form 
$\boldsymbol{P}  =  diag\left(  \boldsymbol{P}_{0},  \boldsymbol{P}_{1}, ..., \boldsymbol{P}_{m}  \right)$,
where
$\boldsymbol{P}_{i}$    %
for $i=1,...,m$ are $1 \times 1$ blocks as below,
\begin{equation}
\label{eq:20181011367}
\begin{gathered}
	1 - \mu_{i} - 
		\left( q_{2i-1,2i} \left( \pi_{2i-1} + \pi_{2i} \right) \right) / ( \pi_{2i-1}  \pi_{2i} )
\end{gathered}
\end{equation}
and the $(m+1) \times (m+1)$ block
$\boldsymbol{P}_{0}$    %
is as below,
\begin{equation}
	\label{eq:20181011419}
	\begin{gathered}
		\boldsymbol{P}_{0} =
		\boldsymbol{I} - 
		\sum\nolimits_{i=1}^{m} 
		\left( \iota_{i} \boldsymbol{e}_{0} - \mu_{i} \boldsymbol{e}_{i} \right)
		\left( \boldsymbol{e}_{0} - \boldsymbol{e}_{i} \right)^{T},
	\end{gathered}
\end{equation}
where 
$\iota_{i} = \frac{ \mu_{i} ( \pi_{2i-1} + \pi_{2i} ) }{ \pi_{0} }$ 
for $i=1,...,m$.
The $1 \times 1$ blocks in (\ref{eq:20181011367}) are the eigenvalues of the transition probability matrix
$\boldsymbol{P}_{0}$    %
as well.
We denote these eigenvalues by $\overline{s}_i$, as below,
\begin{equation}
	\label{eq:20181013903}
	\begin{gathered}
		\overline{s}_{i} = 1 - \mu_{i} 
		- 
			\left( q_{2i-1,2i} \left( \pi_{2i-1} + \pi_{2i} \right) \right) / \left( \pi_{2i-1}  \pi_{2i} \right)
	\end{gathered}
\end{equation}
for $i=1,...,m$.
	Lemma \ref{sec:Lemma-Star_Friendship-Eigenstructure} shows that there is interlacing \cite{HAEMERS1995593} (\ref{sec:AppendixInterlacing}) between the eigenvalues of the transition probability matrices of the Reversible Markov chains over the friendship graph and its corresponding star topology.
	But this interlacing is not necessarily tight interlacing (Appendix \ref{sec:AppendixInterlacing}).
	Considering the $SLEM$ of $\boldsymbol{P}_{0}$, given in (\ref{eq:20181011419}) and the single eigenvalues $\overline{s}_{i}$ given in (\ref{eq:20181013903}),
	it is obvious that if
	$|\overline{s}_{i}| \leq SLEM\left(  \boldsymbol{P}_{0}  \right) $, then the aforementioned interlacing is a tight interlacing.
	This is an important advantage in tackling
		problem (\ref{eq:Eq20181027427}),
	since
	it shows that the $SLEM$ of the transition probability matrix $\boldsymbol{P}$ is equal to that of $\boldsymbol{P}_{0}$ given in (\ref{eq:20181011419}).
	Hence, instead of minimizing the $SLEM$ of the whole transition probability matrix $\boldsymbol{P}$, it suffices to minimize the $SLEM$ of $\boldsymbol{P}_{0}$ (\ref{eq:20181011419}) with the constraint $|\overline{s}_{i}| \leq SLEM\left(  \boldsymbol{P}_{0}  \right) $.
	Thus, instead of the optimization problem (\ref{eq:Eq201712231257}),
	the following optimization problem has to be addressed,
	\begin{equation}
		\label{eq:Eq201811023654}
		\begin{aligned}
			\hspace{-10pt}
			\min\limits_{ \{ q_{i,j} | \{i,j\} \in \mathcal{E}_{V} \} }
			\;\;
			&\mu\left( \boldsymbol{P}_{0} \right),
			\\
			s.t.
			\qquad\;\;
			&
			\mu_{i} \geq  0, \;
			1 - \mu_{i} \geq  0,
			\; \text{for} \; i = 1, ..., m,
			\\ &
				\widehat{\boldsymbol{\pi}}^{T} \boldsymbol{P}_{0}  =  \widehat{\boldsymbol{\pi}}^{T},
				\;\; %
				\boldsymbol{P}_{0} \boldsymbol{1}
				=
				\boldsymbol{1},
				\;\;
				|\overline{s}_{i}| \leq \mu \left(  \boldsymbol{P}_{0}  \right)
			\\
			&
			1 - \sum\nolimits_{k=1}^{m} 
				\left( \mu_{k} ( \pi_{2k-1} + \pi_{2k} ) / \pi_{0} \right) 
			\geq 0,
		\end{aligned}
	\end{equation}
	where
		$\widehat{\boldsymbol{\pi}}^{T} = [ 1, \frac{ \pi_{1} + \pi_{2} }{ \pi_{0} }, ..., \frac{ \pi_{2m-1} + \pi_{2m} }{ \pi_{0} } ]$.
	Note that the conclusion above holds true for $m \geq 2$.
	Now
	we formulate the optimization problem (\ref{eq:Eq201811023654}) in the form of semidefinite programming problem  \cite{BoydConvexBook,JafarizadehIEEESensors2011}.
In doing so, we have to convert the $\boldsymbol{P}_{0}$ into a symmetric matrix, which can be done by
multiplying $\boldsymbol{P}_{0}$ from left and right with the $(m+1) \times (m+1)$ diagonal matrix
	$\boldsymbol{M}$ $=$ $diag($ $1$, $\sqrt{ \frac{ \pi_{1} + \pi_{2} }{ \pi_{0} } }$, $...$,  $\sqrt{ \frac{ \pi_{2m-1} + \pi_{2m} }{ \pi_{0} } }$ $)$.
By doing so, $\widehat{\boldsymbol{P}}_{0}$ is obtained as below,
\begin{equation}
	\label{eq:Eq201808108358}
	\begin{gathered}
		\widehat{\boldsymbol{P}}_{0}    %
		= \boldsymbol{I} -
		\sum\nolimits_{i=1}^{m} \mu_{i}
		\left( \vartheta_{i} \boldsymbol{e}_{0} - \boldsymbol{e}_{i} \right)
		\left( \vartheta_{i} \boldsymbol{e}_{0} - \boldsymbol{e}_{i} \right)^{T},
	\end{gathered}
\end{equation}
where
$\vartheta_{i} = \sqrt{ \frac{ \pi_{2i-1} + \pi_{2i} }{ \pi_{0} } }$,
for $i=1,...,m$.
Note that both matrices presented in (\ref{eq:20181011419}) and
(\ref{eq:Eq201808108358}),    %
have the same set of eigenvalues.
Therefore, we use
(\ref{eq:Eq201808108358})    %
for analyzing the eigenstructure of
$\boldsymbol{P}_{0}$.    %
Since the star topology is a tree graph, the vectors
$\{  \vartheta_{i}  \boldsymbol{e}_{0} - \boldsymbol{e}_{i} | i=1,...,m \}$
are independent of each other and thus they can form a basis.

Following a procedure similar to that of \cite{BoydFastestmixing2003}, it can be shown that
	problem (\ref{eq:Eq201811023654})
is a convex optimization problem
and
	considering  (\ref{eq:Eq201808108358}), it
can be derived as the following semidefinite programming problem,
	\begin{subequations}
		\label{eq:Eq201801262501}
		\begin{align}
			\min_{    \resizebox{.16\hsize}{!}{$\{ \mu_{1}, ..., \mu_{m} \} \cup \{ s \} $}    }
			\;\;
			&s,
			\nonumber
			\\
			s.t.
			\quad
			&
				(s - 1) \boldsymbol{I}
				+
				\sum\nolimits_{i=1}^{m} \mu_{i}
				\left( \vartheta_{i} \boldsymbol{e}_{0} - \boldsymbol{e}_{i} \right)
				\left( \vartheta_{i} \boldsymbol{e}_{0} - \boldsymbol{e}_{i} \right)^{T}
				+
				\boldsymbol{v}_{0} \boldsymbol{v}_{0}^{T}
				\succcurlyeq    %
				0
			\label{eq:Eq201801262501a}
			\\
			&
				(s + 1) \boldsymbol{I}
				-
				\sum\nolimits_{i=1}^{m} \mu_{i}
				\left( \vartheta_{i} \boldsymbol{e}_{0} - \boldsymbol{e}_{i} \right)
				\left( \vartheta_{i} \boldsymbol{e}_{0} - \boldsymbol{e}_{i} \right)^{T}
				-
				\boldsymbol{v}_{0} \boldsymbol{v}_{0}^{T}
				\succcurlyeq    %
				0
			\label{eq:Eq201801262501b}
			\\
			&
					s   -   1 + \mu_{i} + 
					( q_{2i-1,2i} \left( \pi_{2i-1} + \pi_{2i} \right) / ( \pi_{2i-1}  \pi_{2i} ) )
					\geq  0
			\label{eq:Eq201801262501c}
			\\&
					s   +   1 - \mu_{i} - 
					( q_{2i-1,2i} \left( \pi_{2i-1} + \pi_{2i} \right) / ( \pi_{2i-1}  \pi_{2i} ) )
					\geq  0
			\label{eq:Eq201801262501d}
			\\&
					1 - \mu_{i} - 
					\frac{  q_{2i-1,2i}  }{  \pi_{2i-1}   }
					\geq 0,
					\;\;\; %
					1 - \mu_{i} - 
					\frac{  q_{2i-1,2i}  }{  \pi_{2i}  }
					\geq 0
			\label{eq:Eq201801262501f}
			\\&
				1 - \sum\nolimits_{i=1}^{m} \mu_{i} \vartheta_{i}^{2} \geq 0
			\label{eq:Eq201801262501g}
		\end{align}
	\end{subequations}
	where $\boldsymbol{v}_{0}^{T} = [ 1, \vartheta_{1}, ..., \vartheta_{m} ]$, and the constraints (\ref{eq:Eq201801262501c}), (\ref{eq:Eq201801262501d}), 
	(\ref{eq:Eq201801262501f}) hold for $i = 1, ..., m$.
	The symbol $\preccurlyeq$ denotes matrix inequality, i.e., $\boldsymbol{X} \preccurlyeq \boldsymbol{Y}$ means $\boldsymbol{Y} - \boldsymbol{X}$ is positive semidefinite.
Introducing
$\boldsymbol{x} = [ \mu_{1}, ..., \mu_{m}, s]^{T}$,
$\boldsymbol{c} = [ 0, ..., 0, 1]^{T}$,
and the block diagonal matrices
$\boldsymbol{F}_{0}$ and $\boldsymbol{F}_{i}$
with the following blocks
\begin{equation}
	\nonumber
	\begin{gathered}
		\boldsymbol{F}_{0}  =  diag(
		\boldsymbol{v}_{0} \boldsymbol{v}_{0}^{T} - \boldsymbol{I},
		\boldsymbol{I} - \boldsymbol{v}_{0} \boldsymbol{v}_{0}^{T},
		\boldsymbol{D}_{1},
		-\boldsymbol{D}_{1},
		\boldsymbol{D}_{2},
		\boldsymbol{D}_{3},
		1
		)
	\end{gathered}
\end{equation}
where
$\boldsymbol{D}_{1}$, $\boldsymbol{D}_{2}$ and $\boldsymbol{D}_{3}$ are $m \times m$ diagonal matrices where
the $i$-th diagonal elements of $\boldsymbol{D}_{1}$ and $\boldsymbol{D}_{2}$ and $\boldsymbol{D}_{3}$ are %
$\frac{ q_{2i-1,2i} \left( \pi_{2i-1} + \pi_{2i} \right) }{ \pi_{2i-1}  \pi_{2i} }  -  1$,
and
$1$  $-$  $\frac{  q_{2i-1,2i}  }{ \pi_{2i-1} }$,
and
$1  -  \frac{  q_{2i-1,2i}  }{ \pi_{2i} }$,
respectively,
for $i=1,...,m$,
\begin{equation}
	\nonumber
	\begin{gathered}
		\boldsymbol{F}_{s} = diag( \boldsymbol{I}_{m+1}, \boldsymbol{I}_{m+1}, \boldsymbol{I}_{m}, \boldsymbol{I}_{m}, \boldsymbol{0}_{m}, \boldsymbol{0}_{m}, 0 ),
	\end{gathered}
\end{equation}
$\boldsymbol{F}_{i}$ $=$ $diag($    $( \vartheta_{i} \boldsymbol{e}_{0} - \boldsymbol{e}_{i} )$ $( \vartheta_{i} \boldsymbol{e}_{0} - \boldsymbol{e}_{i} )^{T}$,
$- ( \vartheta_{i} \boldsymbol{e}_{0} - \boldsymbol{e}_{i} )$ $( \vartheta_{i} \boldsymbol{e}_{0} - \boldsymbol{e}_{i} )^{T}$,
$\boldsymbol{e}_{i} \boldsymbol{e}_{i}^{T}$,
$- \boldsymbol{e}_{i} \boldsymbol{e}_{i}^{T}$,
$- \boldsymbol{e}_{i} \boldsymbol{e}_{i}^{T}$,
$- \boldsymbol{e}_{i} \boldsymbol{e}_{i}^{T}$,
$- \vartheta_{i}^{2} )$
for $i = 1, ..., |\boldsymbol{x}|-1$,
problem (\ref{eq:Eq201801262501}) can be written in the standard form of the semidefinite programming \cite{BoydConvexBook,JafarizadehIEEESensors2011} as below,
\begin{equation}
\label{eq:Eq201801272653}
\begin{aligned}
\min\limits_{\boldsymbol{x}}
\;\;
&\boldsymbol{c}^{T} \cdot \boldsymbol{x},
\\
s.t.
\quad
&
\boldsymbol{F}(x) = \sum\nolimits_{i=1}^{|\boldsymbol{x}|} \boldsymbol{x}_{i} \boldsymbol{F}_{i} + \boldsymbol{F}_{0} \succeq 0
\end{aligned}
\end{equation}
The dual problem is as following,
\begin{equation}
\label{eq:Eq201801272673}
\begin{aligned}
\max\limits_{\boldsymbol{Z}}
\;\;
&-Tr \left[ \boldsymbol{F}_{0} \cdot \boldsymbol{Z} \cdot \boldsymbol{Z}^{T} \right],
\\
s.t.
\quad
&
Tr \left[ \boldsymbol{F}_{s} \cdot \boldsymbol{Z} \cdot \boldsymbol{Z}^{T} \right]  = \boldsymbol{c}_{|\boldsymbol{x}|} = 1,
\\
&
	Tr \left[ \boldsymbol{F}_{i} \cdot \boldsymbol{Z} \cdot \boldsymbol{Z}^{T} \right]  = \boldsymbol{c}_{i} = 0 \;\; \text{for} \;\; i=1, ..., |\boldsymbol{x}|-1.
\end{aligned}
\end{equation}
The dual variable $\boldsymbol{Z}$ can be written as
$\boldsymbol{Z}$  $=$
$[ \boldsymbol{Z}_{1}^{T}$, $\boldsymbol{Z}_{2}^{T}$, $\boldsymbol{Z}_{3}^{T}$, $\boldsymbol{Z}_{4}^{T}$, $\boldsymbol{Z}_{5}^{T}$, $\boldsymbol{Z}_{6}^{T}$, $\boldsymbol{Z}_{7}^{T} ]^{T}$
where
$\boldsymbol{Z}_{1}$  $=$  $\sum\nolimits_{ i=1 }^{ m } a_{i}$ $($ $\vartheta_{i}$ $\boldsymbol{e}_{0}$  $-$  $\boldsymbol{e}_{i}$ $)$,
$\boldsymbol{Z}_{2}$  $=$  $\sum\nolimits_{ i=1 }^{ m }$ $b_{i}$ $($ $\vartheta_{i}$ $\boldsymbol{e}_{0}$ $-$ $\boldsymbol{e}_{i}$ $)$,
$\boldsymbol{Z}_{3}$  $=$  $\sum\nolimits_{ i=1 }^{ m }$ $c_{i}$ $\boldsymbol{e}_{i}$,
$\boldsymbol{Z}_{4}$  $=$  $\sum\nolimits_{ i=1 }^{ m }$ $d_{i}$ $\boldsymbol{e}_{i}$,
$\boldsymbol{Z}_{5}$  $=$  $\sum\nolimits_{ i=1 }^{ m }$ $e_{i}$ $\boldsymbol{e}_{i}$,
$\boldsymbol{Z}_{6}$  $=$  $\sum\nolimits_{ i=1 }^{ m }$ $f_{i}$ $\boldsymbol{e}_{i}$,
$\boldsymbol{Z}_{7}  =  g $
Using
these definitions above,
the dual constraints in
(\ref{eq:Eq201801272673})  %
(i.e. $Tr\left( \boldsymbol{Z} \boldsymbol{F}_{i} \boldsymbol{Z}^{T} \right)  =  0$)
reduce to the following
\begin{equation}
\label{eq:Eq20171114916}
\begin{gathered}
		\left( \boldsymbol{Z}_{1}^{T} \left( \vartheta_{i} \boldsymbol{e}_{0}  -  \boldsymbol{e}_{i} \right) \right)^{2}
		-
		\left( \boldsymbol{Z}_{2}^{T} \left( \vartheta_{i} \boldsymbol{e}_{0}  -  \boldsymbol{e}_{i} \right) \right)^{2}
		+
		\left( \boldsymbol{Z}_{3}^{T} \boldsymbol{e}_{i} \right)^{2}
		-
		\left( \boldsymbol{Z}_{4}^{T} \boldsymbol{e}_{i} \right)^{2}
		-
		\left( \boldsymbol{Z}_{5}^{T} \boldsymbol{e}_{i} \right)^{2}
		-
		\left( \boldsymbol{Z}_{6}^{T} \boldsymbol{e}_{i} \right)^{2}
		-
		\left( \boldsymbol{Z}_{7}^{T} \vartheta_{i} \right)^{2}
		=    0
\end{gathered}
\end{equation}
The complementary slackness condition
\cite{BoydConvexBook,JafarizadehIEEESensors2011},  %
reduces to
\begin{subequations}
\label{eq:Eq20171114938}
\begin{gather}
		\left( (s - 1) \boldsymbol{I}
		+
		\sum\nolimits_{j=1}^{m} \mu_{j}
		\left( \vartheta_{j} \boldsymbol{e}_{0} - \boldsymbol{e}_{j} \right)
		\left( \vartheta_{j} \boldsymbol{e}_{0} - \boldsymbol{e}_{j} \right)^{T}
		\right) \boldsymbol{Z}_{1}  =  0
	\label{eq:Eq20171114938a}
	\\
		\left(  (s + 1) \boldsymbol{I}
		-
		\sum\nolimits_{j=1}^{m} \mu_{i}
		\left( \vartheta_{j} \boldsymbol{e}_{0} - \boldsymbol{e}_{j} \right)
		\left( \vartheta_{j} \boldsymbol{e}_{0} - \boldsymbol{e}_{j} \right)^{T}  \right)  \boldsymbol{Z}_{2} = 0
	\label{eq:Eq20171114938b}
	\\
			c_{i} \left(   s   -   1 + \mu_{i} + 
			 q_{2i-1,2i} \frac{ \pi_{2i-1} + \pi_{2i} }{ \pi_{2i-1}  \pi_{2i} } 
			\right)  =  0
	\label{eq:Eq20171114938c}
	\\
			d_{i} \left(   s   +   1 - \mu_{i} - q_{2i-1,2i} \frac{ \pi_{2i-1} + \pi_{2i} }{ \pi_{2i-1}  \pi_{2i} }  \right)  =  0
	\label{eq:Eq20171114938d}
	\\
		e_{i} \left(   1 - \mu_{i} - \frac{  q_{2i-1,2i}  }{  \pi_{2i-1}  }   \right)  =  0
	\label{eq:Eq20171114938e}
	\\
		f_{i} \left(   1 - \mu_{i} - \frac{ q_{2i-1,2i} }{ \pi_{2i} }   \right)  =  0
	\label{eq:Eq20171114938f}
	\\
	g \left(   1 - \sum\nolimits_{j=1}^{m} \mu_{j} \vartheta_{j}^{2}   \right)  =  0
	\label{eq:Eq20171114938g}
\end{gather}
\end{subequations}
for $i=1,...,m$.
Considering (\ref{eq:Eq20171114938a}) and (\ref{eq:Eq20171114938b})
and the linear independence of the vectors
$\{  \vartheta_{i}  \boldsymbol{e}_{0} - \boldsymbol{e}_{i} | i=1,...,m \}$,
equation (\ref{eq:Eq20171114916}) can be written as below,
\begin{equation}
	\label{eq:Eq201810252968}
	\begin{gathered}
			\left(  \left( s - 1 \right) 
			\frac{ a_{i} }{ \vartheta_{i} \mu_{i} }  
			\right)^{2}    %
			-
			\left(  \left( s + 1 \right) 
			\frac{ b_{i} }{ \vartheta_{i} \mu_{i} }  
			\right)^{2}    %
			+
			c_{i}^{2} %
		-
		d_{i}^{2} %
		-
		e_{i}^{2} %
		-
		f_{i}^{2} %
		-
		\left( g \vartheta_{i} \right)^{2} %
		=  0
	\end{gathered}
\end{equation}
\begin{remark}
From (\ref{eq:Eq20171114916}), it is obvious that
$\boldsymbol{Z}_{1} = \boldsymbol{0}$
	and $\boldsymbol{Z}_{3} = \boldsymbol{0}$
result
	$\boldsymbol{Z} = 0$
which is not acceptable.
Thus, it can be concluded that
	both
$\boldsymbol{Z_{1}}$
	and $\boldsymbol{Z}_{3}$
is always non-zero.
\end{remark}
\begin{theorem}
If $\boldsymbol{Z_{1}}$ and $\boldsymbol{Z_{2}}$ are non-zero,
then
for optimal value of $s$,
$1 - s$ and $s + 1$ are both eigenvalues of
	$\widehat{\boldsymbol{P}}_{0}$,
with corresponding eigenvectors $\boldsymbol{Z_{1}}$ and $\boldsymbol{Z_{2}}$.
\end{theorem}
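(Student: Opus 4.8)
The plan is to obtain the statement directly from the complementary slackness conditions of the primal--dual semidefinite pair (\ref{eq:Eq201801272653})--(\ref{eq:Eq201801272673}), which have already been reduced to the system (\ref{eq:Eq20171114938}). The only global input required is that strong duality holds with attainment, so that an optimal primal tuple $(\mu_{1},\dots,\mu_{m},s)$ and an optimal dual $\boldsymbol{Z}$ coexist and jointly satisfy complementary slackness. Since problem (\ref{eq:Eq201811023654}) has already been shown to be a convex semidefinite program, I would justify this by exhibiting a strictly feasible point (Slater's condition), which forces a zero duality gap and the existence of optimal multipliers.

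Writing $\boldsymbol{B} = \sum_{j=1}^{m}\mu_{j}(\vartheta_{j}\boldsymbol{e}_{0}-\boldsymbol{e}_{j})(\vartheta_{j}\boldsymbol{e}_{0}-\boldsymbol{e}_{j})^{T}$, so that $\boldsymbol{B} = \boldsymbol{I}-\widehat{\boldsymbol{P}}_{0}$ by (\ref{eq:Eq201808108358}), the core of the argument is a one-line rearrangement. Condition (\ref{eq:Eq20171114938a}) reads $\left((s-1)\boldsymbol{I}+\boldsymbol{B}\right)\boldsymbol{Z}_{1}=\boldsymbol{0}$; since $\boldsymbol{Z}_{1}\neq\boldsymbol{0}$ by hypothesis this exhibits $\boldsymbol{Z}_{1}$ as a nonzero null vector, and moving the scalar term across gives $\boldsymbol{B}\boldsymbol{Z}_{1}=(1-s)\boldsymbol{Z}_{1}$, i.e.\ $\boldsymbol{Z}_{1}$ is an eigenvector of $\boldsymbol{I}-\widehat{\boldsymbol{P}}_{0}$ with eigenvalue $1-s$. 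The same manipulation applied to (\ref{eq:Eq20171114938b}), namely $\left((s+1)\boldsymbol{I}-\boldsymbol{B}\right)\boldsymbol{Z}_{2}=\boldsymbol{0}$ with $\boldsymbol{Z}_{2}\neq\boldsymbol{0}$, yields $\boldsymbol{B}\boldsymbol{Z}_{2}=(s+1)\boldsymbol{Z}_{2}$, so $\boldsymbol{Z}_{2}$ is an eigenvector with eigenvalue $s+1$. Collecting the two relations gives the assertion.

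Taking (\ref{eq:Eq20171114938a})--(\ref{eq:Eq20171114938b}) as given, the rest is immediate; the step I expect to be the real obstacle is therefore justifying their clean form, which is where the substance lies. The primal blocks in (\ref{eq:Eq201801262501a})--(\ref{eq:Eq201801262501b}) also carry the rank-one correction $\boldsymbol{v}_{0}\boldsymbol{v}_{0}^{T}$, so complementary slackness for the full block-$1$ matrix is actually $\left((s-1)\boldsymbol{I}+\boldsymbol{B}+\boldsymbol{v}_{0}\boldsymbol{v}_{0}^{T}\right)\boldsymbol{Z}_{1}=\boldsymbol{0}$, and I must show the correction drops out, i.e.\ $\boldsymbol{v}_{0}^{T}\boldsymbol{Z}_{1}=0$. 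I would argue this by observing that $\boldsymbol{v}_{0}=[1,\vartheta_{1},\dots,\vartheta_{m}]^{T}$ spans $\ker\boldsymbol{B}$, since $(\vartheta_{j}\boldsymbol{e}_{0}-\boldsymbol{e}_{j})^{T}\boldsymbol{v}_{0}=0$ for every $j$. Consequently the full matrix leaves $\mathrm{span}(\boldsymbol{v}_{0})$ and its orthogonal complement invariant, acting on $\mathrm{span}(\boldsymbol{v}_{0})$ as the scalar $s-1+\|\boldsymbol{v}_{0}\|^{2}=s+\sum_{j=1}^{m}\vartheta_{j}^{2}$, which is strictly positive because $s\geq 0$ and every $\vartheta_{j}>0$. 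Hence $\boldsymbol{v}_{0}$ is not in the null space, forcing $\boldsymbol{Z}_{1}\perp\boldsymbol{v}_{0}$ and thus $\boldsymbol{v}_{0}\boldsymbol{v}_{0}^{T}\boldsymbol{Z}_{1}=\boldsymbol{0}$; the symmetric computation gives $\boldsymbol{v}_{0}^{T}\boldsymbol{Z}_{2}=0$. With this orthogonality in hand the reduction to (\ref{eq:Eq20171114938a})--(\ref{eq:Eq20171114938b}) is legitimate and the eigenvalue identities follow as above.
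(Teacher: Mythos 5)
Your proposal follows the same route as the paper: both arguments rest on complementary slackness for the primal--dual pair (\ref{eq:Eq201801272653})--(\ref{eq:Eq201801272673}), with nonzero $\boldsymbol{Z}_{1},\boldsymbol{Z}_{2}$ forcing the two LMI blocks (\ref{eq:Eq201801262501a})--(\ref{eq:Eq201801262501b}) to become singular at the optimum; the paper's own proof is simply a terser statement of the same mechanism (the constraints bound the spectrum, and nonzero dual blocks force attainment of both bounds). Where you add value is in explicitly justifying why the rank-one term $\boldsymbol{v}_{0}\boldsymbol{v}_{0}^{T}$ drops out of the slackness conditions, a step the paper performs silently when it writes (\ref{eq:Eq20171114938a})--(\ref{eq:Eq20171114938b}) without that term.

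Two caveats, neither fatal. First, what you actually derive is $(\boldsymbol{I}-\widehat{\boldsymbol{P}}_{0})\boldsymbol{Z}_{1}=(1-s)\boldsymbol{Z}_{1}$ and $(\boldsymbol{I}-\widehat{\boldsymbol{P}}_{0})\boldsymbol{Z}_{2}=(1+s)\boldsymbol{Z}_{2}$, equivalently $\widehat{\boldsymbol{P}}_{0}\boldsymbol{Z}_{1}=s\boldsymbol{Z}_{1}$ and $\widehat{\boldsymbol{P}}_{0}\boldsymbol{Z}_{2}=-s\boldsymbol{Z}_{2}$. So $1-s$ and $1+s$ are eigenvalues of $\boldsymbol{I}-\widehat{\boldsymbol{P}}_{0}$, not of $\widehat{\boldsymbol{P}}_{0}$ itself (for $s>0$ the value $1+s$ cannot be an eigenvalue of $\widehat{\boldsymbol{P}}_{0}$, whose spectrum lies in $[-1,1]$); the theorem as printed conflates the two matrices, and you should say so rather than asserting that the rearrangement ``gives the assertion.'' Second, your orthogonality argument is not actually symmetric between the two blocks: on $\mathrm{span}(\boldsymbol{v}_{0})$ the block of (\ref{eq:Eq201801262501b}) acts by the scalar $s+1-\|\boldsymbol{v}_{0}\|^{2}=s-\sum_{j}\vartheta_{j}^{2}$, which with the paper's unnormalized $\boldsymbol{v}_{0}^{T}=[1,\vartheta_{1},\dots,\vartheta_{m}]$ is negative in the relevant parameter ranges and would contradict positive semidefiniteness of that block; the constraint is only consistent if $\boldsymbol{v}_{0}$ is the normalized Perron vector (as in the standard FMMC formulation), in which case both scalars equal $s$ and your argument goes through whenever $s>0$. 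Your conclusion $\boldsymbol{v}_{0}^{T}\boldsymbol{Z}_{2}=0$ survives either way, since the argument only needs the scalar to be nonzero, but the ``symmetric computation'' claim should be replaced by the actual computation.
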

\begin{proof}
The constraints of the optimization problem
(\ref{eq:Eq201801262501}) %
state that all eigenvalues of
	$\widehat{\boldsymbol{P}}_{0}$
are bounded in between $s-1$ and $1+s$.
Since both $\boldsymbol{Z_{1}}$ and $\boldsymbol{Z_{2}}$ are non-zero,
then for the optimal $SLEM$, it can be concluded that
	$s$ $=$ $\mu$ $(  \widehat{\boldsymbol{P}}_{0}  )$ $=$ $1$ $-$ $\lambda_{2}$ $(  \widehat{\boldsymbol{P}}_{0}  )$ $=$ $\lambda_{N}$ $(  \widehat{\boldsymbol{P}}_{0}  )$ $-$ $1$.
\end{proof}
If the constraints of (\ref{eq:Eq201801262501c}) to (\ref{eq:Eq201801262501g})
are strict (i.e. $- \sum\nolimits_{k \neq i} q_{i,k} + \pi_{i} > 0$ for $i=0, ..., m$
and $|\overline{s}_{i}| <  s$)
then it can be concluded that $\pi_{i} - \sum_{j \neq i} q_{i,j} \neq 0$.
Thus, we have
\begin{equation}
\label{eq:Eq20171114955}
\begin{gathered}
	c_{i}
		= d_{i} = e_{i} = f_{i} = g
	=  0
	\;\;  \text{for}  \;\;  i=1, ..., m.
\end{gathered}
\end{equation}
This will dictate additional constraints on the equilibrium distribution.
In general (\ref{eq:Eq20171114938}) hold for any underlying topology.
But in the case of tree topologies where vectors
$\{  \vartheta_{i}  \boldsymbol{e}_{0} - \boldsymbol{e}_{i} | i=1,...,m \}$
are independent of each other, (\ref{eq:Eq20171114938a}) and (\ref{eq:Eq20171114938b}) can be interpreted as below,
\begin{subequations}
\label{eq:Eq20171114964}
\begin{gather}
		\left( s - 1 \right) a_{i}  +  \mu_{i} \left( \vartheta_{i} \boldsymbol{e}_{0} - \boldsymbol{e}_{i} \right)^{T} \boldsymbol{Z}_{1}  =  0,
	\label{eq:Eq20171114964a}
	\\
		\left( s + 1 \right) b_{i}  -  \mu_{i} \left( \vartheta_{i} \boldsymbol{e}_{0} - \boldsymbol{e}_{i} \right)^{T} \boldsymbol{Z}_{2}  =  0,
	\label{eq:Eq20171114964b}
\end{gather}
\end{subequations}
for $i=1,...,m$.
From
(\ref{eq:Eq201810252968})
and
(\ref{eq:Eq20171114955}),
we have
\begin{equation}
\label{eq:Eq201807237745}    %
\begin{gathered}
		\left( (s-1) 
		\frac{ a_{i} }{ \vartheta_{i} \mu_{i} } 
		\right)^{2}  =  \left( (s+1)  
		\frac{ b_{i} }{ \vartheta_{i} \mu_{i} } 
		\right)^{2}
\end{gathered}
\end{equation}
for $i=1,...,m$.
Considering the Gram matrix 
$\boldsymbol{G}_{ij} = ( \vartheta_{i}  \boldsymbol{e}_{0} - \boldsymbol{e}_{i} )^{T}  (  \vartheta_{j}  \boldsymbol{e}_{0} - \boldsymbol{e}_{j} )$, 
which is equal to 
$\vartheta_{i}^{2} + 1$ if $i=j$
and 
$\vartheta_{i} \vartheta_{j}$ otherwise,
(\ref{eq:Eq20171114964}) can be written as
\begin{subequations}
	\label{eq:Eq201807237825}
	\begin{gather}
		a_{i} \mu_{i} + \mu_{i}  \vartheta_{i}  \sum\nolimits_{j=1}^{m}  \vartheta_{j}  a_{j}   =  ( 1 - s ) a_{i},
		\label{eq:Eq201807237825a}
		\\
		b_{i} \mu_{i} + \mu_{i} \vartheta_{i} \sum\nolimits_{j=1}^{m}  \vartheta_{j} b_{j}   =  ( 1 + s ) b_{i},
		\label{eq:Eq201807237825b}
	\end{gather}
\end{subequations}
for $i=1,...,m$,
Since
$\sum\nolimits_{j=1}^{m} \vartheta_{j} a_{j}$
and
$\sum\nolimits_{j=1}^{m} \vartheta_{j} b_{j}$
are independent of $i$, thus the following can be concluded from (\ref{eq:Eq201807237825})
\begin{subequations}
	\label{eq:Eq201807237841}
	\begin{gather}
			\left( \mu_{i} + s - 1 \right)  
			\frac{ a_{i} }{ \mu_i \vartheta_{i} }
			= %
			\left( \mu_{j} + s - 1 \right)  
			\frac{ a_{j} }{ \mu_j \vartheta_{j} },
		\label{eq:Eq201807237841a}
		\\
			\left( \mu_{i} - s - 1 \right)  
			\frac{ b_{i} }{ \mu_i \vartheta_{i} }
			= %
			\left( \mu_{j} - s - 1 \right)  
			\frac{ b_{j} }{ \mu_j \vartheta_{j} },
		\label{eq:Eq201807237841b}
	\end{gather}
\end{subequations}
for $i=1,...,m$,
It can be shown that 
considering (\ref{eq:Eq201807237745}), from (\ref{eq:Eq201807237841}), the following is the only acceptable conclusion
$\mu_{i} = \mu_{j}$ for $i,j=1,...,m$.
Defining
$\mu = \mu_{i}$,
equations (\ref{eq:Eq201807237825}) can be written as below,
\begin{subequations}
\label{eq:Eq201807237897}
\begin{gather}
	a_{i} \mu + \mu \vartheta_{i}  \sum\nolimits_{j=1}^{m}  \vartheta_{j} a_{j}   =  ( 1 - s ) a_{i},
	\label{eq:Eq201807237897a}
	\\
	b_{i} \mu + \mu  \vartheta_{i}  \sum\nolimits_{j=1}^{m}   \vartheta_{j}  b_{j}   =  ( 1 + s ) b_{i},
	\label{eq:Eq201807237897b}
\end{gather}
\end{subequations}
for $i=1,...,m$.
Multiplying both sides of (\ref{eq:Eq201807237897}) with $ \vartheta_{i}$ and
summing the resultant equations over $i=1,...,m$, we have
\begin{subequations}
\label{eq:Eq201807237929}
\begin{gather}
	\left(
	\mu \left( 1 + \sum\nolimits_{i=1}^{m} \vartheta_{i}^{2}  \right)  -  \left( 1 - s \right)
	\right)
	\sum\nolimits_{i=1}^{m}  \vartheta_{i}  a_{i}
	=
	0,
	\label{eq:Eq201807237929a}
	\\
	\left(
	\mu \left( 1 + \sum\nolimits_{i=1}^{m} \vartheta_{i}^{2}  \right)  -  \left( 1 + s \right)
	\right)
	\sum\nolimits_{i=1}^{m}  \vartheta_{i}  b_{i}
	=
	0,
	\label{eq:Eq201807237929b}
\end{gather}
\end{subequations}
Of all possible answers of (\ref{eq:Eq201807237929}), the one corresponding to $\sum_{i=1}^{m}  \vartheta_{i}  a_{i}  = 0$ and $\sum_{i=1}^{m}  \vartheta_{i}  b_{i}  \neq 0$ is acceptable, where from (\ref{eq:Eq201807237897a}) and (\ref{eq:Eq201807237929b}), we have
$\mu = 1 - s$,
and 
$\mu \left( 1 + \sum_{i=1}^{m} \vartheta_{i}^{2}  \right) = 1 + s$,
which in turn the following can be concluded for $\mu$,
\begin{equation}
\label{eq:Eq201807237973} 
\begin{gathered} 
		\mu 
		= 
		\frac{ 2 }{ 2  + \sum_{i=1}^{m} \vartheta_{i}^{2} } 
		= 
			\frac{ 2 \pi_{0} }{ 2 \pi_{0}  + \sum\nolimits_{i=1}^{m} \left( \pi_{2i-1} + \pi_{2i} \right) },
\end{gathered}
\end{equation}
and thus (\ref{eq:Eq201807237981}).
Substituting (\ref{eq:Eq201807237973}) in $\mu = 1 - s$, equation (\ref{eq:Eq201807237991}) can be concluded for $s$.
Thus, (\ref{eq:Eq201807237973}), (\ref{eq:Eq201807237981}) and (\ref{eq:Eq201807237991}) are the only acceptable answers from (\ref{eq:Eq201807237929}).
And they
hold true if
(\ref{eq:Eq20171114955})
is satisfied
i.e.,
$\pi_{0}$ $\geq$ $\sum_{i=1}^{m}$ $q_{0,2i-1}$ $+$ $q_{0,2i} $.
Considering (\ref{eq:Eq201807237981}), this constraint can be written as
$\sum_{i=1}^{m}$ $($ $\pi_{2i-1}$ $+$ $\pi_{2i}$ $)$ $\leq$ $2\pi_{0}$
Thus, it can be concluded
that for 
$\sum_{i=1}^{m}$ $($ $\pi_{2i-1}$ $+$ $\pi_{2i}$ $)$ $\leq$ $2\pi_{0}$, 
we have $SLEM \leq \frac{1}{2}$.
For
$\sum_{i=1}^{m}$ $($ $\pi_{2i-1}$ $+$ $\pi_{2i}$ $)$ $>$ $2 \pi_{0}$,
and assuming $b_{i} = 0$ for $i$ $=$ $1$,...,$m$,
and from the fact that $c_{i}=0$ for $i$ $=$ $1$,...,$m$, it can be concluded that
$\left( \left( s - 1 \right) \frac{ a_{i} }{ \vartheta_{i} \mu_i } \right)^{2} = g^{2}$,
and thus
$( \frac{ a_{i} }{ \vartheta_{i} \mu_{i} } )^{2}$ $=$ $( \frac{ a_{j} }{ \vartheta_{j} \mu_{j} } )^{2}$
for $i,j=1,...,m$.
Hence, from (\ref{eq:Eq201807237841a}), we have
$( \mu_{i}$ $+$ $s$ $-$ $1)$ 		$=$		$\pm$ $(  \mu_{j}$ $+$ $s$ $-$ $1  )$
for $i,j=1,...,m$,
the possible solutions are
$\mu_{i} = \mu_{j}$, for $i,j=1,...,m$.
and %
$s = 1 - \mu_{i} - \mu_{j}$ for $i$ $\neq$ $j$ $=$ $1$,...,$m$, 
where for $m > 2$, 
$\mu_{i} = \mu_{j}$    %
holds true.
Defining
$\mu = \mu_{i}$
for $i=1,...,m$, from
$\sum_{i=1}^{m} ( q_{0,2i-1} + q_{0,2i} ) = \pi_{0}$,
we have
$\pi_{0} = \mu \sum_{i=1}^{m} ( \pi_{2i-1} + \pi_{2i}  )$
and thus
$\mu = \frac{ \pi_{0} }{  \sum\limits_{i=1}^{m} ( \pi_{2i-1} + \pi_{2i} )  }$,
which in turn results in (\ref{eq:Eq201807278235}).
From (\ref{eq:Eq201807237929a}), the only acceptable answer is obtained for $\sum_{i=1}^{m} \vartheta_{i} a_{i} = 0$, where from (\ref{eq:Eq201807237897a}), we have $ s = 1 - \mu$, or equivalently (\ref{eq:Eq201807278283}), which is positive for 
$\sum_{j=1}^{m} ( \pi_{2j-1} + \pi_{2j} ) > 2 \pi_{0}$.
From (\ref{eq:Eq201807278283}), it is obvious that for 
$\sum_{j=1}^{m} ( \pi_{2j-1} + \pi_{2j} ) > 2 \pi_{0}$, 
the resultant $SLEM$ is greater than $\frac{1}{2}$.
Note that the results in (\ref{eq:Eq201807237981}), (\ref{eq:Eq201807237991}), (\ref{eq:Eq201807278235}) and (\ref{eq:Eq201807278283}) hold for $m \geq 3$.
From the constraints
\begin{equation}
\label{eq:corner-constraints-star-raw}
\begin{gathered}
	q_{2i-1,2i} + q_{0,2i-1}  \leq  \pi_{2i-1},
	\;\;\;\; \;\;  %
	q_{2i-1,2i} + q_{0,2i}  \leq  \pi_{2i},
\end{gathered}
\end{equation}
the constraints (\ref{eq:corner-constraints-star-interior}) and (\ref{eq:corner-constraints-star-non-interior}) are obtained,
where
(\ref{eq:corner-constraints-star-interior})
is valid for $\Pi \leq 2 \pi_{0}$
and
(\ref{eq:corner-constraints-star-non-interior})
is valid for $\Pi > 2 \pi_{0}$.
This results in the conclusion stated in Lemma \ref{lemma:star-high-q-given}, where it is shown that for values of 
$q_{2j-1,2j}$    %
greater than the limits provided in
(\ref{eq:corner-constraints-star-interior}) and (\ref{eq:corner-constraints-star-non-interior}) %
the resultant $SLEM$
is greater than the value of $SLEM$ obtained for
values of 
$q_{2j-1,2j}$    %
that satisfy
(\ref{eq:corner-constraints-star-interior}) and (\ref{eq:corner-constraints-star-non-interior}).
Hence, as stated in Theorem \ref{theorem1209},
for the case of $m \geq 3$, the resultant Pareto frontier is reduced to a single point corresponding to 
$q_{2i-1,2i} = 0$    %
for $i=1,...,m$,    %
which is the minimum point of the optimization problem (\ref{eq:Eq20181027427}).
\subsection{Case of $m=2$}

\subsubsection{Equilibrium Distributions Satisfying (\ref{eq:Eq201810142587})}
\label{sec:m2-with-20}
For a friendship graph with $m=2$, equations (\ref{eq:Eq201807237825}) can be written as below,
\begin{subequations}
	\label{eq:Eq201810132476}
	\begin{align}
		\left( \mu_{i} \left( 1 + \vartheta_{i}^{2} \right) + s - 1 \right) a_{i}
		&=
		-\mu_{i} \vartheta_{1} \vartheta_{2} a_{ (i \; mod \; 2 ) + 1 },
		\label{eq:Eq201810132476a}
		\\
		\left( \mu_{i} \left( 1 + \vartheta_{i}^{2} \right) - s - 1 \right) b_{i}
		&=
		-\mu_{i} \vartheta_{1} \vartheta_{2} b_{ (i \; mod \; 2 ) + 1 },
		\label{eq:Eq201810132476b}
	\end{align}
\end{subequations}
for $i=1,2$.
Considering $a_{1}^{2} = b_{1}^{2}$ and $a_{2}^{2} = b_{2}^{2}$, we have
$( \mu_{i} ( 1 + \vartheta_{i}^{2} ) + s - 1 )^{2}
=
( \mu_{i} ( 1 + \vartheta_{i}^{2} ) - s - 1 )^{2}$,
for $i=1,2$. 
This %
results in either $s=0$ (which is not acceptable) or
\begin{equation}
	\label{eq:Eq201810132524}
	\begin{gathered}
			\mu_{i} 
			= 
			\frac{ 1 }{ 1 + \vartheta_{i}^{2} } 
			= 
			\frac{ \pi_{0} }{   \pi_{0} + \pi_{2i-1} + \pi_{2i} },
			\;\;\; \text{for} \;\;\; i=1,2.
	\end{gathered}
\end{equation}
Hence, (\ref{eq:Eq201810132536}) are obtained for the optimal values of 
$q_{0,2i+j-2}$    %
for $i,j=1,2$.
From (\ref{eq:Eq201810132476a}), we have
$\frac{ a_{1} }{ a_{2} } 
= 
\frac{  -\mu_{1} \vartheta_{1} \vartheta_{2}  }{  \mu_{1} \left( 1 + \vartheta_{1}^{2} \right) + s - 1  } 
= 
\frac{  \mu_{2} \left( 1 + \vartheta_{2}^{2} \right) + s - 1  }{  -\mu_{2} \vartheta_{1} \vartheta_{2}  }
$, 
where 
substituting the optimal value of $\mu_{1}$ and $\mu_{2}$ as provided in (\ref{eq:Eq201810132524}), 
for $s$, we have 
$s^{2} + 1 - \frac{  \vartheta_{1} \vartheta_{2}  }{  \left( 1 + \vartheta_{1}^{2} \right)  \left( 1 + \vartheta_{2}^{2} \right)  } = 0$.
Thus, (\ref{eq:Eq201810132576}) is obtained for the optimal value of $s$ (i.e., $SLEM$).
Since
	$g = 0$,
then from 
$q_{0,1}$ $+$ $q_{0,2}$ $+$ $q_{0,3}$ $+$ $q_{0,4}$ $\leq$ $\pi_{0}$, 
the constraint (\ref{eq:Eq201810142587}) is obtained.
This means that the results in (\ref{eq:Eq201810132536}) and (\ref{eq:Eq201810132576}) are valid for equilibrium distributions that satisfy (\ref{eq:Eq201810142587}).
The optimal values of $SLEM$ obtained in (\ref{eq:Eq201810132576})
is
valid if the single eigenvalues of the transition probability matrix (as provided in (\ref{eq:20181013903})) are smaller than the value of $SLEM$ in absolute value.
This condition imposes the following additional constraints on $q_{(i,1),(i,2)}$,
\begin{equation}
	\label{eq:201810142613}
	\begin{gathered}
				\frac{ \pi_{2i-1} + \pi_{2i} }{ \pi_{0} + \pi_{2i-1} + \pi_{2i} }
			-
			s
			<
			q_{2i-1,2i}    %
			<
			\frac{ \pi_{2i-1} + \pi_{2i} }{ \pi_{0} + \pi_{2i-1} + \pi_{2i} }
			+
			s
	\end{gathered}
\end{equation}
for $i=1,2$, 
where
$s$ is given in (\ref{eq:Eq201810132576}). %
Other constraints for
$q_{1,2}$    %
and
$q_{3,4}$    %
are
$q_{0,2i+j-2} + q_{2i-1,2i} < \pi_{2i+j-2}$, for $i,j=1,2$, 
	which are equivalent to the upper bounds in (\ref{eq:201810142672}).
	Considering the fact that 
	$q_{2i-1,2i}$    %
	for $i=1,2$    %
	are positive, it can be shown that 
	the constraints
	(\ref{eq:201810142613})    %
	apply a lower limit only on one of 
	$q_{2i-1,2i}$ for $i=1,2$,    %
	which depends on 
	the equilibrium distribution satisfying 
	$\pi_{1} + \pi_{2}  >  \pi_{3} + \pi_{4}$,    %
	or its reverse.
	In the special case of equilibrium distributions with 
	$\pi_{1}$ $+$ $\pi_{2}$ $=$ $\pi_{3}$ $+$ $\pi_{4}$, 
	the lower limits in 
	(\ref{eq:201810142613})    %
	are equal to zero and they don't apply any lower limit on 
	$q_{2i-1,2i}$    %
	for $i=1,2$.  %
	As a result, similar to the case of star topology, the optimal value of $s$ given in (\ref{eq:Eq201810132576})
	is valid for all values of 
	$q_{2i-1,2i}$    %
	for $i=1,2$    %
	smaller than
	the upper limits in (\ref{eq:201810142613}). %
	For values of 
	$q_{2i-1,2i}$    %
	for $i=1,2$    %
	greater than
	these upper limits    %
	the optimal value of $SLEM$ increases.
		Thus, Remark \ref{remark:m2-equal-sides} can be concluded. %
	For the equilibrium distributions with 
	$\pi_{1} + \pi_{2} \neq \pi_{3} + \pi_{4}$, 
	the constraints 
	(\ref{eq:201810142613})    %
	impose lower limit on either one of 
	$q_{2i-1,2i}$    %
	for $i=1,2$.
	Such equilibrium distributions result in a Pareto Frontier.
		As an example, in Example \ref{example:1}, we have provided the optimal $SLEM$ for the equilibrium distribution 
		$\pi_{0} = \pi_{4} = 3$,    %
		$\pi_{1} = \pi_{2} = \pi_{3} = 1$.
	\subsubsection{Equilibrium Distributions not Satisfying (\ref{eq:Eq201810142587})}
	\label{sec:m2-without-20}
		For equilibrium distributions that does not satisfy (\ref{eq:Eq201810142587}), the assumption
		$g = 0$  %
		is not true.
		Assuming $a_{1} = a_{2} = 0$ will result in the trivial answer which is not acceptable.
		Thus, we have to assume that $b_{2} = b_{2} = 0$.
		Hence, (\ref{eq:Eq20171114916}) can be written as
			$\left( \left( s - 1 \right) \frac{ a_{i} }{ \mu_i } \right)^{2} = \vartheta_{i}^{2} g^{2} + c_{i}^{2}$
			for $i=1,2$,
		and from the fact that $c_{1}=c_{2}=0$, it can be concluded that
			$\left( \frac{ a_{1} }{  \vartheta_{1} \mu_{1} } \right)^{2} = \left( \frac{ a_{2} }{ \vartheta_{2}\mu_{2} } \right)^{2}$.
		Since
		$g \neq 0$, %
		then we have
		$1 - \vartheta_{1}^{2} \mu_{1} - \vartheta_{2}^{2} \mu_{2} = 0$, where considering
		$\frac{  a_{1}  }{  \vartheta_{1} \mu_{1}  }    =    - \frac{  a_{2}   }{ \vartheta_{2} \mu_{2}  }$,
	the optimal answers are obtained as below,
		\begin{equation}
			\label{eq:Eq201810142663}
			\begin{gathered}
						\mu_{1} = \frac
						{ \pi_0 \left( \pi_0 + 2 \left( \pi_{3} + \pi_{4} \right) \right) }
						{ A_{0} }
						\quad 
						\mu_{2} = \frac
						{ \pi_0 \left( \pi_0 + 2 \left( \pi_{1} + \pi_{2} \right) \right) }
						{ A_{0} }
			\end{gathered}
		\end{equation}
	where 
	$A_{0} = \pi_0 ( \pi_{1} + \pi_{2}  +  \pi_{3} + \pi_{4}  )  +  4 ( \pi_{1} + \pi_{2} ) ( \pi_{3} + \pi_{4} )$.
		Hence, (\ref{eq:Eq201810162962}) are obtained for the optimal values of 
		$q_{0,2i+j-2}$    %
		for $i,j=1,2$.
		and (\ref{eq:Eq201810142678}) is obtained for the optimal value of $s$.

	The optimal values of $SLEM$ obtained in (\ref{eq:Eq201810142678})
	is
	valid if the single eigenvalues of the transition probability matrix (as provided in (\ref{eq:20181013903})) are smaller than the value of $SLEM$ in absolute value.
	This condition imposes the following additional constraints on 
	$q_{2i-1,2i}$,    %
		\begin{equation}
			\label{eq:201810162993}    %
			\begin{gathered}
				B_{i}	\left( 1 - \mu_{i} - s \right)    <    q_{2i-1,2i}    <    B_{i}	\left( 1 - \mu_{i} + s \right)
			\end{gathered}
		\end{equation}
		for $i=1,2$,
	where 
	$\mu_{1}$, $\mu_{2}$ and $s$ are given in
	(\ref{eq:Eq201810142663})    %
	and
	(\ref{eq:Eq201810142678}),    %
	respectively
	and $ B_{i} = \frac{ \pi_{2i-1} \pi_{2i} }{ \pi_{2i-1} + \pi_{2i} } $.
	Other constraints for 
	$q_{1,2}$    %
	and 
	$q_{3,4}$    %
	are
	$q_{0,2i+j-2} + q_{2i-1,2i} < \pi_{2i+j-2}$, for $i,j=1,2$,    %
			which are equivalent to (\ref{eq:201810163031b}) and (\ref{eq:201810163031d}).
		Substituting the values of $\mu_{1}$, $\mu_{2}$ and $s$ from
		(\ref{eq:Eq201810142663})    %
		and
		(\ref{eq:Eq201810142678}),    %
		in the lower limits
			of (\ref{eq:201810162993}),
			the lower bounds (\ref{eq:201810163031a}) and (\ref{eq:201810163031c}) are obtained.
				Similar to Subsection \ref{sec:m2-with-20}, based on positivity of 
				$q_{2i-1,2i}$    %
				for $i=1,2$, 
				it can be shown that 
				the constraints 
				(\ref{eq:201810162993}) 
				apply a lower limit only on one of 
				$q_{2i-1,2i}$    %
				for $i=1,2$, %
				and 
				Remark \ref{remark:m2-equal-sides} can be concluded.
				In Example \ref{example:2}, we have provided the optimal $SLEM$ for an example with 
				$\pi_{1} + \pi_{2} \neq \pi_{3} + \pi_{4}$,    %
				which result in a Pareto Frontier.
							\subsection{Case of $m=1$}
							As mentioned in Subsection \ref{sec:main-results-m1},
							for the friendship graph with $m=1$,
							the graph is reduced to a triangle where we denote the central vertex by index $3$ and the other two vertices by indices $1$ and $2$.
							Note that $q_{1,2}$ is given and
							instead of solving  the optimization problem (\ref{eq:Eq201811023654}), we address the problem by solving the optimization problem (\ref{eq:Eq201712231257}), via SDP.
							In doing so, the matrix $\boldsymbol{L}(q)$, can be written as below,
								$\boldsymbol{D}^{-\frac{1}{2}}  \boldsymbol{L}(q)  \boldsymbol{D}^{-\frac{1}{2}}
								=$ 
								$q_{1,3} 
								\boldsymbol{e}_{3,1} \boldsymbol{e}_{3,1}^{T}
								+
								q_{2,3} 
								\boldsymbol{e}_{3,2} \boldsymbol{e}_{3,2}^{T}
								+
								q_{1,2} 
								\boldsymbol{e}_{1,2} \boldsymbol{e}_{1,2}^{T}
								$
							where
								$\boldsymbol{e}_{i,j} = \frac{ \boldsymbol{e}_{i} }{ \sqrt{\pi_{i}} } - \frac{ \boldsymbol{e}_{j} }{ \sqrt{\pi_{j}} }$,
								and 
								$\boldsymbol{e}_{1}^{T} = \left[ 1, 0, 0 \right]$, 
								$\boldsymbol{e}_{2}^{T} = \left[ 0, 1, 0 \right]$, 
								and
								$\boldsymbol{e}_{3}^{T} = \left[ 0, 0, 1 \right]$.
							The 
							optimization problem (\ref{eq:Eq201712231257}) can be formulated as the following semidefinite programming problem  \cite{BoydConvexBook,JafarizadehIEEESensors2011},
								\begin{subequations}
									\label{eq:Eq2018110311249}
									\begin{align}
										\min_{\substack{ \{ q_{1,3}, q_{2,3}, s \} }}
										\;\;
										&s,
										\nonumber
										\\
										s.t.
										\quad
										&
											(s - 1) \boldsymbol{I}
											+ \boldsymbol{D}^{-\frac{1}{2}} \boldsymbol{L}(q) \boldsymbol{D}^{-\frac{1}{2}}
											+ \widetilde{\boldsymbol{J}}
											\succcurlyeq    %
											0
										\label{eq:Eq2018110311249a}
										\\
										&
											(s + 1) \boldsymbol{I}
											- \boldsymbol{D}^{-\frac{1}{2}} \boldsymbol{L}(q) \boldsymbol{D}^{-\frac{1}{2}}
											- \widetilde{\boldsymbol{J}}
											\succcurlyeq    %
											0
										\label{eq:Eq2018110311249b}
										\\& 
											\pi_{1}  -   q_{1,2}   -  q_{1,3}   \geq 0,
											\; %
											\pi_{2}  -  q_{1,2}  -  q_{2,3}  \geq 0, 
											\; %
											\pi_{3}  -  q_{1,3}   -  q_{2,3}   \geq 0
									\end{align}
								\end{subequations}
							where
							$\widetilde{\boldsymbol{J}} = \frac{  \boldsymbol{D}^{\frac{1}{2}} \boldsymbol{1} \boldsymbol{1}^{T} \boldsymbol{D}^{\frac{1}{2}}  }{  \sum_{i=1}^{3} \pi_{i}  }$.
							Introducing
							$\boldsymbol{x} = [ q_{1,3}, q_{2,3}, s]^{T}$,
							$\boldsymbol{c} = [ 0, 0, 1]^{T}$,
							and the block diagonal matrices
							$\boldsymbol{F}_{0}$, $\boldsymbol{F}_{1}$, $\boldsymbol{F}_{2}$ and $\boldsymbol{F}_{s}$
							as %
							$\boldsymbol{F}_{0}$  $=$  $diag($
							$\widetilde{\boldsymbol{J}}$ $-$ $\boldsymbol{I}$ $-$ $\boldsymbol{\zeta}_{1,2}$ $\boldsymbol{\zeta}_{1,2}^{T}$, 
							$\boldsymbol{I}$ $+$ $\boldsymbol{\zeta}_{1,2}$ $\boldsymbol{\zeta}_{1,2}^{T}$ $-$ $\widetilde{\boldsymbol{J}}$, 
							$\pi_{1}$ $-$ $q_{1,2}$, 
							$\pi_{2}$ $-$ $q_{1,2}$, 
							$\pi_{3}$ 
							$)$, 
							$\boldsymbol{F}_{1}$ $=$
							$diag($
							$\boldsymbol{\zeta}_{1,3}$ $\boldsymbol{\zeta}_{1,3}^{T}$,
							$-$ $\boldsymbol{\zeta}_{1,3}$ $\boldsymbol{\zeta}_{1,3}^{T}$,
							$-$ $1$,
							$0$,
							$-1$
							$)$,
							$\boldsymbol{F}_{2}$ $=$
							$diag($
							$\boldsymbol{\zeta}_{2,3}$  $\boldsymbol{\zeta}_{2,3}^{T}$,
							$-$ $\boldsymbol{\zeta}_{2,3}$  $\boldsymbol{\zeta}_{2,3}^{T}$,
							$0$,
							$-1$,
							$-1$
							$)$,
							$\boldsymbol{F}_{s}$ $=$ $diag($ $\boldsymbol{I}$, $\boldsymbol{I}$, $0$, $0$, $0$ $)$
							where
							$\boldsymbol{\zeta}_{i,j}    =    \left(  \frac{ \boldsymbol{e}_{i} }{ \sqrt{ \pi_{i} } }  -  \frac{ \boldsymbol{e}_{j} }{ \sqrt{ \pi_{j} } }  \right)$,
							problem (\ref{eq:Eq201801262501}) can be written in the standard form of the semidefinite programming \cite{BoydConvexBook,JafarizadehIEEESensors2011}
							as provided in (\ref{eq:Eq201801272653}), and
							the dual problem is as in (\ref{eq:Eq201801272673}).
							The dual variable $\boldsymbol{Z}$ can be written as
							$\boldsymbol{Z}$  $=$
							$[ \boldsymbol{Z}_{1}^{T}$, $\boldsymbol{Z}_{2}^{T}$, $\boldsymbol{Z}_{3}^{T} ]^{T}$,
							where
							the variables
							$\boldsymbol{Z}_{1}$, $\boldsymbol{Z}_{2}$, $\boldsymbol{Z}_{3}$
							are as below,
								\begin{equation}
									\label{eq:20180914989}
									\begin{gathered}
										\boldsymbol{Z}_{1}
										=
										a_{1,3}
										\widetilde{ \boldsymbol{e}_{3,1} }
										+
										a_{2,3}
										\widetilde{ \boldsymbol{e}_{3,2} },
										\;\; \;\; %
										\boldsymbol{Z}_{2}
										=
										b_{1,3}
										\widetilde{ \boldsymbol{e}_{3,1} }
										+
										b_{2,3}
										\widetilde{ \boldsymbol{e}_{3,2} },
										\;\;\;\; %
										\boldsymbol{Z}_{3}
										=
										c_{1} \boldsymbol{e}_{1}
										+
										c_{2} \boldsymbol{e}_{2}
										+
										c_{3} \boldsymbol{e}_{3}.
									\end{gathered}
								\end{equation}
								$\widetilde{ \boldsymbol{e}_{3,i} }$
								for $i = 1, 2$ is the dual of
								$\boldsymbol{e}_{3,i}$
								such that
								$\boldsymbol{e}_{3,i}^{T}     \widetilde{ \boldsymbol{e}_{3,j} } = \delta_{i,j}$,
							where $\delta_{i,j}$ is the Kronecker delta function.
							Considering the relation
								$\boldsymbol{e}_{1,2}    =    \boldsymbol{e}_{3,2}    -    \boldsymbol{e}_{3,1}$, 
							The complementary slackness condition \cite{BoydConvexBook,JafarizadehIEEESensors2011}, 
								can be written as below,
								\begin{subequations}
									\label{eq:201809141080}
									\begin{gather}
											\left(
											q_{i,3} \left( \frac{ 1 }{ \pi_{i} } + \frac{ 1 }{ \pi_{3} } \right)
											+
											\frac{ q_{1,2} }{ \pi_{i} }
											+
											s - 1
											\right)
											a_{i,3}
											=
											\left( -\frac{ q_{i^{'}, 3} }{ \pi_{3} }+ \frac{ q_{1,2} }{ \pi_{i} } \right)
											a_{i^{'}, 3}
										\label{eq:201809141080a}
										\\
											\left(
											q_{i,3} \left( \frac{ 1 }{ \pi_{i} } + \frac{ 1 }{ \pi_{3} } \right)
											+
											\frac{ q_{1,2} }{ \pi_{i} }
											- s - 1
											\right)
											b_{i,3}
											=
											\left( - \frac{ q_{i^{'}, 3} }{ \pi_{3} }   +  \frac{ q_{1,2} }{ \pi_{i} } \right) b_{i^{'}, 3}
										\label{eq:201809141080b}
									\end{gather}
								\end{subequations}
								for $i=1,2$, 
								where $i^{'} = ( i \; mod \; 2) + 1$.
							Using (\ref{eq:20180914989}), dual constraints, %
							$tr\left[ \boldsymbol{Z} \boldsymbol{F}_{i} \boldsymbol{Z}^{T} \right] = c_{i} = 0$,
							can be written as below,
								\begin{equation}
									\label{eq:201809292663}
									\begin{gathered}
											a_{1,3}^{2}
											-
											b_{1,3}^{2}
											-
											\left(  c_{3}^{2}  +  c_{1}^{2}  \right)
											=  0,
										\;\; \;\; %
											a_{2,3}^{2}
											-
											b_{2,3}^{2}
											-
											\left(  c_{3}^{2}  +  c_{2}^{2}  \right)
											=  0.
									\end{gathered}
								\end{equation}
								Assuming
								$q_{1,3} + q_{2,3} < \pi_{3}$,
								$q_{1,2} + q_{2,3} < \pi_{2}$,
								and
								$q_{1,2} + q_{1,3} < \pi_{1}$,
								we have
								$c_{1} = c_{2} = c_{3} = 0$.
								Thus, from (\ref{eq:201809292663}), it can be concluded that
									$a_{1,3}^{2}	= 	b_{1,3}^{2}$, 
									and 
									$a_{2,3}^{2}  =  b_{2,3}^{2}$,
									where considering
								(\ref{eq:201809141080}) for $i=1$,
							it can be concluded that
								\begin{equation}
									\label{eq:201809151178}
									\begin{gathered}
											q_{1,3} \left(  \frac{ 1 }{ \pi_{1} }  +  \frac{ 1 }{ \pi_{3} }  \right)
											+
											\frac{ q_{1,2} }{ \pi_{1} }
											= 1,
											\;\; \;\; %
											q_{2,3} \left(  \frac{ 1 }{ \pi_{2} }  +  \frac{ 1 }{ \pi_{3} }  \right)
											+
											\frac{ q_{1,2} }{ \pi_{2} }
											= 1.
									\end{gathered}
								\end{equation}
								From (\ref{eq:201809151178}), the results in (\ref{eq:201809151209-main-result}) can be concluded for $q_{1,3}$ and $q_{2,3}$.
								Substituting (\ref{eq:201809151178}) in (\ref{eq:201809141080a}) for $i=1,2$,
							we have
							$s^{2}$
							$=$
							$(  \frac{ q_{2,3} }{ \pi_{3} } - \frac{ q_{1,2} }{ \pi_{1} }  )(  \frac{ q_{1,3} }{ \pi_{3} }  -  \frac{ q_{1,2} }{ \pi_{2} }  )$,
								where substituting %
								(\ref{eq:201809151209-main-result}),    %
								the result in (\ref{eq:201809151244-main-result}) is obtained for $s$.
							Considering the fact that 
								the constraint $q_{1,2} \leq \min\{ \pi_{1}, \pi_{2} \}$ is always satisfied,
							selecting $q_{1,3}$ and $q_{2,3}$ as given in
							(\ref{eq:201809151209-main-result}),    %
							guarantees that the constraints
							$q_{1,2} + q_{2,3} \leq \pi_{2}$,
							and
							$q_{1,2} + q_{1,3} \leq \pi_{1}$
							hold true.
							Substituting the values of $q_{1,3}$ and $q_{2,3}$ in 
							constraint $q_{1,3}$ $+$ $q_{2,3}$ $\leq$ $\pi_{3}$, 
							the lower limit (\ref{eq:201809302893-main-result}) can be concluded for $q_{1,2}$.
								For $\pi_{3}^{2} > \pi_{1} \pi_{2}$, the lower bound provided in 
								(\ref{eq:201809302893-main-result})    %
								is negative and it does not impose any lower limit on $q_{1,2}$.
								Thus, for equilibrium distributions with $\pi_{3}^{2} \geq \pi_{1} \pi_{2}$, the optimal weights and $SLEM$ in
								(\ref{eq:201809151209-main-result})    %
								and
								(\ref{eq:201809151244-main-result})    %
								hold true for all values of $q_{1,2}$.
								It is obvious that the Pareto frontier for the friendship graph with $m=1$ and equilibrium distribution $\pi_{3}^{2} > \pi_{1} \pi_{2}$ is the line
								$s =$ 
								$( \pi_{1}$ $\pi_{2}$ $-$ $($ $\pi_{1}$ $+$ $\pi_{2}$ $+$ $\pi_{3} )$ $q_{1,2}  )$
								$/$
								$\sqrt{ \pi_{1} \pi_{2} \left( \pi_{1} + \pi_{3} \right)  \left( \pi_{2} + \pi_{3} \right) }$
								for
								$  0  \leq  q_{1,2}  \leq  \frac{  \pi_{1} \pi_{2}  }{  \pi_{1} + \pi_{2} + \pi_{3}  }  $.
							\subsubsection{Case of $\pi_{3}^{2} < \pi_{1}\pi_{2}$}
							For $q_{12}$ satisfying 
							(\ref{eq:201809302893-main-result}),    %
							the optimal weights and the $SLEM$ are same as (\ref{eq:201809151209-main-result}) and (\ref{eq:201809151244-main-result}).
							In the case of $q_{12}$ smaller than the limit provided in 
							(\ref{eq:201809302893-main-result}),    %
							the optimal results can be obtained for two different ranges of $q_{1,2}$.
							In the case of the region with lower limit of zero, i.e., $q_{1,2} > 0$, we have
							$b_{2,3} = b_{1,3} = 0$,
							$c_{1} = c_{2} = 0$, $c_{3} \neq 0$,
							where 
							it can be concluded that $a_{1,3}^{2}  =  a_{2,3}^{2}  =  c_{3}^{2}$.
							Also, from $c_{3} \neq 0$, we have
							$q_{1,3}  +  q_{2,3}  =  \pi_{3}$. 
							From $a_{1,3}^{2}  =  a_{2,3}^{2}  =  c_{3}^{2}$, 
							either $a_{1,3} = a_{2,3}$ or $a_{1,3} = -a_{2,3}$ can be derived.
							$a_{1,3} = a_{2,3}$ results in unacceptable $SLEM$, while on the other hand
							for $a_{1,3} = -a_{2,3}$, the following results are obtained
							\begin{subequations}
								\label{eq:201809151364}
								\begin{gather}
									q_{1,3}
									=
									\pi_{1} \pi_{3} ( 2\pi_{2} + \pi_{3} + 2q_{1,2} ) - 2 \pi_{2} \pi_{3} q_{1,2} / B_{0},
									\label{eq:201809151364a}
									\\
									q_{2,3}
									=
									\left( \pi_2\pi_3(2\pi_1 + \pi_3 + 2q_{1,2}) - 2\pi_1\pi_3q_{1,2} \right) / B_{0},
									\label{eq:201809151364b}
								\end{gather}
							\end{subequations}
							\begin{equation}
								\label{eq:201809151355}
								\begin{gathered}
									s = 
									\left( 4\pi_1\pi_2 -\pi_3^2 -4q_{1,2}(\pi_1+\pi_2+\pi_3 ) \right) / B_{0}.
								\end{gathered}
							\end{equation}
							where $B_{0} = 4 \pi_{1} \pi_{2} + \pi_{3} \left( \pi_{1} + \pi_{2} \right)$. 
							Note that the results in (\ref{eq:201809151364}) and (\ref{eq:201809151355}) hold true for equilibrium distributions with $q_{1,2} \geq 0$.
							But, for the case of $\pi_{1} \neq \pi_{2}$ providing the closed-form formula for the upper limit of this region is cumbersome and we omit presenting it for the general case.
							In Example \ref{example:3}, we address the case of $\pi_{1} = 2$, $\pi_{2} = \pi_{3} = 1$, where all three ranges of $q_{12}$ are present.
							In Subsection \ref{sec:pi1=pi2},
							we have address %
							case of $\pi_{1} = \pi_{2}$, where the three ranges are reduced to two
							and we have provided closed-form formulas for the optimal weights and $SLEM$.

					\section{Conclusions}
					\label{sec:Conclusions}
					Arising from the information diffusion model developed based on DeGroot's consensus model, this paper addresses the FMRMC problem over the friendship graph with given transition probabilities on a subset of edges among friends.
					For the resultant multi-objective optimization problem, the corresponding Pareto frontier has been provided for the friendship graph with different number of blades. 
					Interestingly, for friendship graphs with more than two blades, the Pareto frontier is reduced to a single minimum optimal point, which is same as the optimal point corresponding to the minimum spanning tree of the friendship graph, i.e., the star topology.
					Based on this result, it has been shown that there is a lower limit for the transition probabilities on among friends, where for values higher than this limit the transition probabilities do not have any impact on the convergence rate. 
					One possible direction for future work is to investigate other topologies where the Pareto frontier is reduced to the optimal point corresponding to its minimum spanning tree.

					\section*{Acknowledgements}
					The author would like to thank Mr. Satoshi Egi and Dr. Yusaku Kaneta of 
					Rakuten Institute of Technology, %
					for the valuable 
					discussions    %
					regarding the paper.

\appendix

\section{Proof of Lemma \ref{sec:Lemma-Star_Friendship-Eigenstructure}}
\label{sec:Appendix-Lemma-Proof-Star_Friendship-Eigenstructure}
Let the row vector $\boldsymbol{V}$ be the right eigenvector of
$\boldsymbol{P}_{S}$    %
corresponding to eigenvalue $\lambda$.
Thus, we have
$\boldsymbol{P}_{S} \times \boldsymbol{V}    =    \lambda \boldsymbol{V}$.
Multiplying both sides of 
this equation %
from the left hand side with $\boldsymbol{\Lambda}$, we have
$\boldsymbol{\Lambda} \times \boldsymbol{P}_{S} \times \boldsymbol{V}    =    \lambda \boldsymbol{\Lambda} \times \boldsymbol{V}$
where from (\ref{eq:20181009332}), we have
$\boldsymbol{P} \times \boldsymbol{\Lambda} \times \boldsymbol{V}    =    \lambda \boldsymbol{\Lambda} \times  \boldsymbol{V}$.
Thus, it can be concluded that $\boldsymbol{\Lambda} \times \boldsymbol{V}$ is the right eigenvector of
$\boldsymbol{P}$  %
corresponding to eigenvalue $\lambda$.
The rank of $\boldsymbol{\Lambda}$ is equal to $m+1$.
Therefore, $\boldsymbol{\Lambda} \times \boldsymbol{V}$ is equal to $\boldsymbol{0}$ only if $\boldsymbol{V}$ is equal to zero.
Hence, it can be concluded that every eigenvalue of
$\boldsymbol{P}_{S}$    %
is also an eigenvalue of
$\boldsymbol{P}$.    %

	\section{Proof of Lemma \ref{lemma:star-high-q-given}}
	\label{lemma:star-high-q-given-proof}
		Without loss of generality, let 
		$\pi_{2j-1} \leq \pi_{2j}$.   %
		Assuming that the value of 
		$q_{2j-1,2j}$    %
		does not satisfy the constraints
		(\ref{eq:corner-constraints-star-interior}) and (\ref{eq:corner-constraints-star-non-interior}), %
		the constraints (\ref{eq:corner-constraints-star-raw}) reduce to equality,
		where the optimal value of 
		$q_{0,2j-1}$    %
		can be written as %
		$q_{0,2j-1}  =  \pi_{2j-1}  -  q_{2j-1,2j}$. 
		Since the value of 
		$q_{0,2j-1}$    %
		is different than the optimal ones provided in
		(\ref{eq:Eq201807237981}) %
		and
		(\ref{eq:Eq201807278235}) %
		then the resultant $SLEM$ is greater than the one obtained using
		(\ref{eq:Eq201807237981}) %
		and
		(\ref{eq:Eq201807278235}). %
		Thus, Lemma \ref{lemma:star-high-q-given} can be concluded.
\section{Interlacing \cite{HAEMERS1995593}}
\label{sec:AppendixInterlacing}
Consider two sequences of real numbers:
$\lambda_{1} \geq ... \geq \lambda_{n}$
and
$\mu_{1} \geq ... \geq \mu_{m}$,
with $m < n$.
The second sequence is said to interlace the first one whenever
$ \lambda_{i} \geq \mu_{i} \geq \lambda_{n-m+i} \;\; \text{for} \;\; i=1,...,m$.
The interlacing is called tight if there exist an integer $k \in [0,m]$ such that
$ \lambda_{i} = \mu_{i} \;\; \text{for} \;\; 1 \leq i \leq k \quad \text{and} \quad \lambda_{n-m+i} = \mu_{i} \;\; \text{for} \;\; k+1 \leq i \leq m$.
\bibliography{ArXiv_Format}

\end{document}